\definecolor{darkgreen}{rgb}{0.0,0,0.9}
\newcommand{\Input}{\item[{\bf Input:}]}
\newcommand{\Output}{\item[{\bf Output:}]}
\renewcommand{\Return}{\item[{\bf return}]}
\newcommand{\setword}[2]{%
  \phantomsection
  #1\def\@currentlabel{\unexpanded{#1}}\label{#2}%
}
\def\bone{{\bf 1}}
\def\bx{{\bf x}}
\def\by{{\bf y}}
\def\bz{{\bf z}}
\def\bof{{\bf f}}
\def\R{\mathbb{R}}
\def\mE{\mathbb{E}}
\def\eps{\epsilon}
\newcommand{\EE}[2]{\mE_{#1}\left[#2\right]}
\DeclareMathOperator{\st}{s.t.} 
\DeclareMathOperator{\poly}{poly}
\def\h{h}
\def\expansion{expansion}
\newcommand{\Set}[2]{
  \{\, #1 \mid #2 \, \}
}
\newtheorem{theorem}{Theorem}[section]
\newtheorem{lemma}{Lemma}[section]
\newtheorem{fact}[theorem]{Fact}
\newtheorem{proposition}[theorem]{Proposition}
\newenvironment{proofof}[1]{{\em Proof of #1.}}{\hfill
\qed}
\DeclareMathOperator{\polylog}{polylog}
\title{Approximation Algorithms for Finding\\ Maximum  Induced Expanders}
\author{
Shayan Oveis Gharan
\thanks{Department of Computer Science and Engineering, University of Washington.
Email: \protect\url{shayan@cs.washington.edu}}
\and
Alireza Rezaei
\thanks{Department of Computer Science and Engineering, University of Washington.
Email: \protect\url{arezaei@cs.washington.edu}}
}
\begin{document}
\maketitle

\begin{abstract}
	We initiate the study of approximating the largest induced expander in a given graph $G$. Given a $\Delta$-regular graph $G$ with $n$ vertices, the goal is to find the set with the largest induced expansion of size at least $\delta \cdot n$. We design a bi-criteria approximation algorithm for this problem; if the optimum has induced spectral expansion $\lambda$ our algorithm returns a $\frac{\lambda}{\log^2\delta \exp(\Delta/\lambda)}$-(spectral) expander of size at least $\delta n$ (up to constants).

	Our proof introduces and employs a novel semidefinite programming relaxation for the largest induced expander problem. We expect to see further applications of our SDP relaxation in  graph partitioning problems. In particular, because of the close connection to the small set expansion problem, one may be able to obtain new insights into the unique games problem.
\end{abstract}



\section{Introduction}
In an instance of the maximum clique problem, we are given an undirected graph $G=(V,E)$ and the goal is to find the largest set $S$ of vertices of $G$ such that the induced subgraph $G[S]$ is a clique. The maximum clique problem is extensively studied in the last several decades and it is shown to be one of the hardest problems to approximate in the worst case \cite{Has96}.  

Although the maximum clique problem has many applications in theory and practice,  $G[S]$ being an actual complete graph is a property that is  unstable with respect to slight changes in $G$. First of all, there is no natural extension of the maximum clique problem to weighted graphs. Even if $G$ is unweighted, a large clique of $G$ may be completely eliminated by removing only a few edges of $G$.  For a concrete example, suppose $G$ is a complete graph, i.e., the maximum clique of $G$ has size $n$; if we delete only an $o(1)$ fraction of edges of $G$ uniformly at random, the size of the maximum clique of $G$ reduces by an exponential factor to $\polylog(n)$ \cite{GM75,BE76}. 

It is a natural question to find the maximum size subgraph of $G$ such that $G[S]$ is ``clique-like''. There are several directions to formalize the clique-like property of $G[S]$: For example, one can say $G[S]$ is clique-like if the local neighborhood of every vertex is similar to a clique, i.e., if the average degree of vertices in $G[S]$ is $\Omega(|S|)$; such a measure corresponds to the densest subgraph problem which is also extensively studied in the past decade \cite{Cha00,Fei02,Kho04,BCCFV10}.

In this paper, we use \emph{spectral expansion}
as a \emph{global} clique-like property. First, we define spectral expansion, and then we justify that it can be considered as a clique-like property.
Let $G$ be a $\Delta$-regular graph with $n=|V|$ vertices.
For a pair of vertices $u,v\in V$, let $\bone_{u,v}\in \R^{V}$ be the vector that is $1$ in $v$, $-1$ in $u$ and zero everywhere else. Let $L_{u,v}=\bone_{u,v}\bone_{u,v}^\intercal$. The Laplacian of $G$, $L_G$ is defined as follows:
$$ L_G=\sum_{u\sim v} L_{u,v},$$
where we write $u\sim v$ to denote $\{u,v\}\in E$. 
Note that if $G$ is weighted, then we need to scale $L_{u,v}$ with the weight of the edge $\{u,v\}$.
It is easy to see that $L_G$ is a PSD matrix, its first eigenvalue is zero, and the corresponding eigenvector is the all-ones vector.
The spectral expansion of $G$ is defined as the second smallest eigenvalue  of $L_G$, $\lambda_2(L_G).$
We say $G$ is an $\eps$-\emph{expander} if $\lambda_2(L_G)\geq \eps$.
In this paper,  we design bicriteria approximation algorithms for approximating the largest induced  expander of $G$. 

It is a well-known fact that ($\Delta$-regular)  $\Omega(\Delta)$-expander graphs are essentially sparse complete graphs.
This can be justified    by analyzing either the spectral or combinatorial properties of expander graphs. The eigenvalues of the Laplacian matrix of an $\Omega(\Delta)$-expander graph are essentially the same as the eigenvalues of a complete graph scaled by $\Delta/n$. 
Similarly, the size of any cut is (up to constants) equal to $\Delta/n$ fraction of the same cut in the complete graph. This  follows from the Cheeger's inequality.
For a set $S\subseteq V$ let
$$ h(S)=\frac{|E(S,\overline{S})|}{ |S|},$$
be the \emph{combinatorial expansion} of $S$, where $E(S,\overline{S})=\{\{u,v\}: u\in S, v\notin S\}$ is the set of edges in the cut $(S,\overline{S})$. The combinatorial \expansion~of $G$, $\h(G)$, is defined as follows:
$$ \h(G) = \min_{\emptyset \subsetneq S\subsetneq V} \max\{\h(S),\h(\overline{S})\}=\min_{\emptyset \subsetneq S\subsetneq V} \frac{|E(S,\overline{S})|}{\max\{|S|,|\overline{S}|\}}.$$
 Cheeger's inequality relates (combinatorial) expansion to the spectral expansion.
\begin{theorem}[Discrete Cheeger's inequality \cite{AM85,Alon86}]
\label{thm:cheeger}
For any graph $G$ with maximum degree $\Delta$, we have
	\begin{equation}\label{eq:cheeger} \frac{\lambda_2(L_G)}{2} \leq h(G) \leq \sqrt{2\Delta \lambda_2(L_G)}.	
	\end{equation}
\end{theorem}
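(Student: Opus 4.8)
The plan is to prove the two inequalities in \eqref{eq:cheeger} separately; the left one is the routine ``test vector'' direction and the right one is the harder ``rounding'' direction. Throughout I use the Courant--Fischer characterization $\lambda_2(L_G) = \min_{0\ne x\perp\bone}\frac{x^\intercal L_G x}{x^\intercal x}$ together with the identity $x^\intercal L_G x = \sum_{u\sim v}(x_u - x_v)^2$ (recall $\bone$ is the eigenvector for the zero eigenvalue). For the lower bound, fix a minimizer $S^\star$ in the definition of $h(G)$; since $\max\{h(S^\star),h(\overline{S^\star})\} = |E(S^\star,\overline{S^\star})|/\min\{|S^\star|,|\overline{S^\star}|\}$, we may assume $|S^\star|\le n/2$ and $h(G) = |E(S^\star,\overline{S^\star})|/|S^\star|$. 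Plugging the test vector $x = \bone_{S^\star} - \frac{|S^\star|}{n}\bone$ (orthogonal to $\bone$) into the Rayleigh quotient, the numerator equals $|E(S^\star,\overline{S^\star})|$ — only cut edges contribute, each contributing $1$ — and the denominator equals $|S^\star|(n-|S^\star|)/n \ge |S^\star|/2$. Hence $\lambda_2(L_G)\le 2|E(S^\star,\overline{S^\star})|/|S^\star| = 2h(G)$.

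For the right inequality, which carries the real content, let $x$ be an eigenvector of $L_G$ with eigenvalue $\lambda := \lambda_2(L_G)$, and write $\mathcal R(g) := \sum_{u\sim v}(g_u - g_v)^2 \big/ \sum_v g_v^2$ for the Rayleigh quotient of a nonzero $g$. The first step is to replace $x$ by a \emph{nonnegative} vector supported on at most $n/2$ vertices. Let $m$ be a median of the entries of $x$, set $y = x - m\bone$, and decompose $y = y^+ - y^-$ coordinatewise into positive and negative parts; then $y^+,y^- \ge 0$ have disjoint supports, each of size at most $n/2$. Since $x\perp\bone$ we have $\|y\|^2 = \|x\|^2 + m^2 n \ge \|x\|^2$. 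A short edge-by-edge check shows $\sum_{u\sim v}\big[(y^+_u-y^+_v)^2 + (y^-_u-y^-_v)^2\big]\le \sum_{u\sim v}(y_u-y_v)^2$: an edge whose endpoints have the same sign under $y$ contributes equally to both sides, while an edge with opposite-sign endpoints contributes at most as much, since $(a-b)^2 \ge a^2+b^2$ whenever $ab\le 0$. The right-hand side equals $\sum_{u\sim v}(x_u-x_v)^2 = \lambda\|x\|^2$, and $\|y^+\|^2 + \|y^-\|^2 = \|y\|^2 \ge \|x\|^2$; the mediant inequality then gives $\min\{\mathcal R(y^+),\mathcal R(y^-)\}\le\lambda$. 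Relabel so that $g := y^+$ satisfies $\mathcal R(g)\le\lambda$ and $|\mathrm{supp}(g)|\le n/2$ (if $y^+$ or $y^-$ vanishes, the argument degenerates but still goes through).

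The second step is a threshold (``sweep cut'') rounding of $g$. Sample $t$ uniformly from $[0,\max_v g_v^2]$ and set $S_t = \{v : g_v^2 \ge t\}$; since $S_t\subseteq\mathrm{supp}(g)$ we have $|S_t|\le n/2$. Then $\E{|S_t|} = \sum_v g_v^2\big/\max_v g_v^2$, and since an edge $uv$ is cut exactly when $t$ lies between $g_u^2$ and $g_v^2$, also $\E{|E(S_t,\overline{S_t})|} = \sum_{u\sim v}|g_u^2 - g_v^2|\big/\max_v g_v^2$. Writing $|g_u^2 - g_v^2| = |g_u - g_v|\,(g_u + g_v)$ and applying Cauchy--Schwarz, $\sum_{u\sim v}|g_u^2-g_v^2| \le \sqrt{\sum_{u\sim v}(g_u-g_v)^2}\cdot\sqrt{\sum_{u\sim v}(g_u+g_v)^2}$, and the second factor is at most $\sqrt{2\Delta\sum_v g_v^2}$ because $(g_u+g_v)^2\le 2(g_u^2+g_v^2)$ and every vertex has degree at most $\Delta$. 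Dividing, $\E{|E(S_t,\overline{S_t})|}\big/\E{|S_t|} \le \sqrt{2\Delta\,\mathcal R(g)}\le\sqrt{2\Delta\lambda}$. An averaging argument — since $\E{|E(S_t,\overline{S_t})|}\le \sqrt{2\Delta\lambda}\,\E{|S_t|}$, some realization with $t>0$ has $|E(S_t,\overline{S_t})|\le\sqrt{2\Delta\lambda}\,|S_t|$ and $S_t\ne\emptyset$ — yields a nonempty set with $|S_t|\le n/2$ and $|E(S_t,\overline{S_t})|/|S_t|\le\sqrt{2\Delta\lambda}$, whence $h(G)\le\sqrt{2\Delta\lambda_2(L_G)}$.

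I expect the crux to be the threshold-rounding step: one has to randomize over the \emph{squares} $g_v^2$ rather than the values $g_v$, and the one genuinely non-routine estimate is the Cauchy--Schwarz bound controlling $\sum_{u\sim v}|g_u^2-g_v^2|$ by $\sqrt{2\Delta}$ times the geometric mean of the Dirichlet form $\sum_{u\sim v}(g_u-g_v)^2$ and the norm $\sum_v g_v^2$. This is exactly where the square root and the factor $\Delta$ enter, and hence the source of the quadratic gap between the two sides of \eqref{eq:cheeger}. A secondary point needing care is the median shift-and-split preprocessing: it is what guarantees the rounded set lands in at most half the vertices even though the eigenvector $x$ carries no support constraint, and the degenerate cases (where $y^+$ or $y^-$ is zero, or the median is attained by many vertices) must be checked so as not to break the mediant-inequality step.
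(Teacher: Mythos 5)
The paper does not prove this statement; it is quoted as a known result with citations to Alon--Milman and Alon, and no argument is given in the text. Your proof is therefore compared not against a proof in the paper but against the standard literature, and it is a correct and complete rendition of the standard argument.

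A few small remarks for precision. Your ``mediant inequality'' step is exactly right: if both $\mathcal R(y^+)>\lambda$ and $\mathcal R(y^-)>\lambda$, adding the two inequalities $\sum_{u\sim v}(y^\pm_u-y^\pm_v)^2>\lambda\|y^\pm\|^2$ would force $\lambda\|x\|^2\geq\sum_{u\sim v}(y_u-y_v)^2>\lambda(\|y^+\|^2+\|y^-\|^2)\geq\lambda\|x\|^2$, a contradiction (and the degenerate case $y^-=0$ just collapses to the trivial bound $\mathcal R(y^+)=\mathcal R(y)\leq\lambda$, as you note). In the rounding step you should flag that $\lambda=0$ must be handled separately (then $G$ is disconnected, $h(G)=0$, and the bound is trivial); with $\lambda>0$ the random variable $|E(S_t,\overline{S_t})|-\sqrt{2\Delta\lambda}\,|S_t|$ is a step function of $t$ taking $S_0=V$ only at the measure-zero point $t=0$, so the averaging really does produce a $t>0$ with $\emptyset\neq S_t\subseteq\mathrm{supp}(g)$, $|S_t|\leq n/2$, and $|E(S_t,\overline{S_t})|/|S_t|\leq\sqrt{2\Delta\lambda}$. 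One last point: the paper's displayed formula $\min_{S}\frac{|E(S,\overline S)|}{\max\{|S|,|\overline S|\}}$ is actually a typo for $\min_S\frac{|E(S,\overline S)|}{\min\{|S|,|\overline S|\}}$, since $\max\{h(S),h(\overline S)\}$ places the smaller cardinality in the denominator; you implicitly and correctly work with the latter. With these caveats, the argument is sound and you have identified the genuine crux (the Cauchy--Schwarz estimate on $\sum_{u\sim v}|g_u^2-g_v^2|$, which produces both the square root and the $\Delta$ factor) correctly.
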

By the above theorem, if $G$ is an $\Omega(\Delta)$-expander, then $\h(G)\geq \Omega(\Delta)$. 
One can also prove tighter connections between the structure of cuts in an expander graph and the complete graph by  the expander mixing lemma and its generalizations (see e.g., \cite{BL04}).
In summary, unlike the density, (spectral) expansion can be considered as a global clique-like property.
\paragraph{Motivations.} 
Variants of the largest induced expander problem are previously studied and employed in the design of approximation algorithms. Trevisan \cite{Tre05} showed that one can remove a small fraction of the edges of  $G$ such that any connected component of the remaining graph is an $\Omega(\Delta/\polylog(n))$-expander. He used this fact to design an approximation algorithm for the unique games problem.
More recently, the first author together with Anari showed that if $G$ is $\Delta$-edge-connected, then it has an induced $\Omega(\Delta)$-edge-connected subgraph that is an $\Omega(\Delta/\log n)$-expander.
This fact is used to design an approximation algorithm for Asymmetric TSP \cite{AO14,AO14b}. We emphasize that both of the aforementioned results do not provide any guarantee on the size of the induced expanders that they construct.

Finding induced expanders can also have practical applications in clustering and community detection problems. 
Classically, the expansion or \emph{conductance} are used as combinatorial measures for the quality of a clustering of a graph.
This parameter fails dramatically when the underlying clusters are \emph{overlapping} because the (outside) expansion of each cluster is $\Omega(1)$. The failure of using sparsest cut approximation algorithms is one of the major challenges in overlapping clustering. In those scenarios, it is more natural to look for a cluster which induces an expander graph. For a concrete example, consider the set of all people living in USA in a world wide social network. Since each person typically belongs to multiple international communities, such a set has a large outside expansion. However, it is expectable that it  induces an $\Omega(1)$-expander. 

In general, unlike outside expansion, if $G[S]$ has $\Omega(\Delta)$ spectral expansion, then it has many properties which resembles the structure of a community: 
\begin{enumerate}[i)]
\item Low degree of separation: The diameter of $G[S]$ is at most $O(\log |S|)$. 
\item Small mixing time of random walks: A simple random walk in $G[S]$ mixes in time $O(\log |S|)$ (see \cite{LPW06} for the definition of mixing time and its connection to expander graphs).
\end{enumerate}
We refrain from going into the detailed properties of expanders and we refer interested readers to \cite{HLW06}. 
Next,  we formally define our problem and its SDP relaxation, then we describe our results.

\paragraph{Problem Formulation.}
Throughout the paper we assume that $G=(V,E)$ is an  undirected, unweighted, $\Delta$-regular graph.
We restrict our attention to unweighted graphs for the brevity of the arguments, but all of our results naturally extend to weighted graphs.
Given a parameter $\delta$, we are interested in finding a subset $S\subset V$ of size $|S|\geq \delta n$ with the largest induced spectral expansion,
\begin{equation}\label{eq:benchmarkinducedexp} \lambda(\delta) := \max_{S: |S|\geq \delta n} \lambda_2(L_{G[S]})
\end{equation}
The interesting regime of the problem is when $\lambda(\delta)=\Omega(\Delta)$, i.e., when $G$ has a sparse complete graph as a subgraph.
Because of this, our goal is to approximate the above objective function with no (or as little as possible) loss on the size of $G$ and $\delta$. Our approximation factor may have an exponential loss in $\lambda(\delta)$.

The above problem can be considered as a ``dual'' of the \emph{small set expansion} problem \cite{RS10}. In an instance of the small set expansion problem we are given a $\Delta$-regular graph, and a parameter $\delta$ and we want to find the set $S$ of size at most $\delta\cdot n$ with the smallest (outside) expansion, i.e., we want to approximate 
$$ \h(\delta):=\min_{S: |S|\leq \delta n} \h(S).$$
The problem is extensively studied in the last couple of years because of its close connection to the unique games problem \cite{RST10,BFKM11,OT12,KL12}. To this date, all of the  approximation algorithms of the small set expansion problem incur a loss $\poly(\log(1/\delta))$ in the expansion of the output. The following simple fact relates the two problems
\begin{fact}
	If $G$ has a partitioning into $2\delta\cdot n$ sets each inducing an $\Omega(\eps\cdot \Delta)$-expander,  then 
	$$\h(\delta)\geq \Omega(\eps\cdot\Delta).$$
\end{fact}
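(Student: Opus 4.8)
The plan is to prove the equivalent statement that \emph{every} set $S\subseteq V$ with $|S|\le\delta n$ satisfies $\h(S)\ge\Omega(\eps\Delta)$. Fix such an $S$, let $V_1,\dots,V_k$ (with $k=2\delta n$) be the parts of the given partition, and write $n_i:=|V_i|$ and $S_i:=S\cap V_i$, so that $\sum_i|S_i|=|S|$.

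The first step is to localize the cut $E(S,\overline S)$ to the parts. Every edge of the induced graph $G[V_i]$ that crosses the cut $(S_i,\,V_i\setminus S_i)$ is also an edge of $G$ between $S$ and $\overline S$, and as $i$ ranges over the parts these edge sets are pairwise disjoint (because $V_1,\dots,V_k$ partition $V$); hence
\[
 |E(S,\overline S)|\ \ge\ \sum_{i=1}^{k}\bigl|E_{G[V_i]}(S_i,\,V_i\setminus S_i)\bigr|.
\]
Now apply Cheeger's inequality (Theorem~\ref{thm:cheeger}) to $G[V_i]$: since $G[V_i]$ is an $\Omega(\eps\Delta)$-expander, $\h(G[V_i])\ge\tfrac12\lambda_2(L_{G[V_i]})=\Omega(\eps\Delta)$, and by the definition of combinatorial expansion $\bigl|E_{G[V_i]}(S_i,V_i\setminus S_i)\bigr|\ge\h(G[V_i])\cdot\max\{|S_i|,\,n_i-|S_i|\}\ge\Omega(\eps\Delta)\cdot\min\{|S_i|,\,n_i-|S_i|\}$. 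Summing over $i$ gives
\[
 |E(S,\overline S)|\ \ge\ \Omega(\eps\Delta)\,\sum_{i=1}^{k}\min\{|S_i|,\,n_i-|S_i|\}.
\]

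It then remains to show $\sum_{i}\min\{|S_i|,\,n_i-|S_i|\}=\Omega(|S|)$: plugging this in gives $\h(S)=|E(S,\overline S)|/|S|\ge\Omega(\eps\Delta)$, and since $S$ was an arbitrary set of size at most $\delta n$, this is exactly $\h(\delta)\ge\Omega(\eps\Delta)$. The intuition is a counting argument: there are $2\delta n$ parts but only $|S|\le\delta n$ vertices in $S$, so ``on average'' a part meets $S$ in a minority of its vertices, in which case $\min\{|S_i|,n_i-|S_i|\}=|S_i|$; formally one splits the parts into the light ones with $|S_i|\le n_i/2$ and the heavy ones, bounds the total size of the heavy parts using that they are more than half covered, and concludes that the light parts already contain a constant fraction of $S$. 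This counting step is where I expect the real work to be: if $S$ were a union of a few entire parts, every $\min\{|S_i|,n_i-|S_i|\}$ would vanish and the displayed bound would say nothing, so one genuinely has to use the surplus of parts and rule out that the mass of $S$ concentrates in a handful of large, nearly-covered parts (which is also where an assumption that the parts have comparable sizes, or a preliminary refinement of the partition, would enter if needed).
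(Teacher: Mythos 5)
The paper states this Fact without proof, so there is no in-text argument to compare you against; I can only assess whether your proposal is complete on its own terms, and it is not quite.

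Your first two moves are correct: the edge sets $E_{G[V_i]}(S_i,V_i\setminus S_i)$ are pairwise disjoint and contained in $E(S,\overline S)$, and the easy direction of Cheeger's inequality gives $\bigl|E_{G[V_i]}(S_i,V_i\setminus S_i)\bigr|\ge h(G[V_i])\cdot\min\{|S_i|,n_i-|S_i|\}\ge\Omega(\eps\Delta)\cdot\min\{|S_i|,n_i-|S_i|\}$. (Minor note: the inequality from the definition of $h(G[V_i])$ involves $\min$, not $\max$, in the first step; you write $\max$ and then relax to $\min$, so the final displayed inequality is correct, but the intermediate line as written is not.)

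The gap is exactly where you say you expect the real work to be, and the heavy/light argument you sketch does not close it. If $H$ is the set of heavy parts (those with $|S_i|>n_i/2$), all you can extract is $\sum_{i\in H} n_i<2\sum_{i\in H}|S_i|\le 2|S|$, which bounds the total size of the heavy parts but does not force any of $S$'s mass into the light parts; all of $S$ can sit inside heavy (or fully covered) parts, making every $\min\{|S_i|,n_i-|S_i|\}$ vanish. Worse, under the literal reading of the statement --- $2\delta n$ parts, hence parts of average size $1/(2\delta)$ --- the claim itself is false: take $S$ to be a union of entire parts with few outgoing edges (e.g., a near-component of $G$); then $h(S)$ can be arbitrarily small regardless of how good an expander each $G[V_i]$ is. No counting argument can fix a false statement. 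The Fact is almost certainly misstated, and should read ``a partition into sets each of size at least $2\delta n$'' (equivalently, into roughly $\tfrac{1}{2\delta}$ sets of comparable size). Under that reading the counting step becomes a one-liner: every part satisfies $n_i\ge 2\delta n\ge 2|S|\ge 2|S_i|$, so $\min\{|S_i|,n_i-|S_i|\}=|S_i|$ identically, and $\sum_i\min\{|S_i|,n_i-|S_i|\}=|S|$. Plugging this back in gives $h(S)\ge\Omega(\eps\Delta)$ for every $|S|\le\delta n$, which is the claim. So the missing idea is not a cleverer counting lemma but the realization that the hypothesis must constrain the \emph{size} of the parts, not their \emph{number}; once you make that correction, your own outline finishes the proof immediately.
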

Because of the above close connection to the small set expansion problem, our SDP relaxation and the rounding algorithm also incur a $\poly(\log(1/\delta))$ loss in the expansion of the output.

\paragraph{Related Works.}
In the last decade three general families of algorithms are studied to detect communities which have large induced expansion. The first one is the class of  greedy based algorithms, the second one is the family of local random walk based algorithms, and the last one is the  spectral algorithms that employ eigenvectors of the Laplacian matrix. To the best of our knowledge, all of these algorithms  fail to capture an induced expander because the vertices of the expander may be highly connected to the outside, i.e., we may have $h(S)\geq 1/2$.
The failure is because of the fact that these algorithms are specifically designed to detect sets with small (outside) expansion.

Let us elaborate on the latter fact in each of the three cases. Greedy based algorithms \cite{KVV04,Tre05,AO14} recursively partition the graph using an approximation algorithm for the sparsest cut problem; the algorithm stops once there is no sparse cut in any set of the partition. If the vertices of the hidden expander are highly connected to the outside, the algorithm may simply separate them apart and the structure of the expander will be lost in the partitioning of the graph. Nonetheless, we show that a variant of this algorithm provides an $\poly(\delta)$-approximation to the largest induced expander problem; we will also provide some tight examples.

Local graph clustering algorithms \cite{ST08,ACL06,AP09,OT12,ZLM13} simulate simple lazy random walks, or the associated Markov chains like the page rank \cite{ACL06} or the evolving set process \cite{MP03}, on a graph. They detect a nonexpanding set by looking at threshold sets of the probability distribution of the walk at some time $t$.
Perhaps, the closest result to our work is the  work of Zhu, Lattanzi and Mirrokni \cite{ZLM13} who show that if for a set $S$, $\lambda_2(G[S]) \gg \log(n)h(S)$,   then it is possible to recover the set. Unfortunately, when $h(S)$ is large, the random walk algorithm fails to recover $S$ because before the walk visits all vertices of $S$, most of the probability mass has escaped the set.

The last family of algorithms use spectral methods, in particular the eigenvalues and eigenvectors of the (normalized) Laplacian matrix to detect the communities \cite{LOT12,LRTV11,OT14,DPRS14,PSZ15,Sin16}. These algorithms typically assume that there is a large gap between the $k$ and $k+1$ eigenvalue of the graph. This assumption implies that the graph can be partitioned into induced expanders which have very small outside expansion \cite{OT14}. It follows that by utilizing the first $k$ eigenvectors of the Laplacian matrix one can recover these expander graphs. However, the existence of a large size induced expander (possibly with large outside expansion) does not  guarantee the existence of small eigenvalues, so, in our settings, the spectral methods fail to recover the hidden expander.


\subsection{Our SDP relaxation}
As alluded to in the previous section, the known local and spectral algorithms fail to find a large induced expander in a given graph. Therefore, in this work, we use semidefinite programming to write a convex relaxation of \eqref{eq:benchmarkinducedexp}. 
There are two underlying obstacles to write a SDP relaxation for our problem. Firstly, the local neighborhood of a vertex in an expander graph may be very sparse and look like just a tree. Therefore, unlike the Lov\'asz theta function \cite{Lov79}, $G$ being an expander does not enforce any constraints on the local neighborhoods. 
Secondly, having an induced expander of size say $\sqrt{n}$ does not imply any global constraint on the structure of $G$. So, our SDP constraints must be ``localized''   to the induced expander that we are trying to find.


Before describing the relaxation, we need to set up a notation and write an equivalent definition of expander graphs.
For a symmetric matrix $A\in \R^{V\times V}$, we say $A$ is positive semidefinite (PSD), $A\succeq 0$, if for any vector $\bx\in\R^V$,
$$ \bx^\intercal A \bx\geq 0.$$
For two matrices $A,B\in\R^{V\times V}$, we write $A\succeq B$ if $A-B$ is PSD. 

Fix a set $S\subseteq V$ and let $K_S$ be a complete graph induced on $S$. 
It is a simple fact that all (except the first) eigenvalues of the Laplacian matrix of a complete graph of size $n$ are equal to $n$. 
Since all (except the first) eigenvalues of $L_{G[S]}$ are at least $\lambda_2(L_{G[S]})$ we can write
\begin{equation}\label{eq:GKSlambda2} L_{G[S]}\succeq \frac{\lambda_2(L_{G[S]})}{|S|}\cdot L_{K_S},
\end{equation}
Next,  we use the above simple inequality to write our SDP relaxation of \eqref{eq:benchmarkinducedexp}.
See \ref{sdp:noh} for the details of our SDP relaxation.
\begin{figure*}
\centering
\fbox{\parbox{5in}{ \vspace*{0mm}
\begin{center}{\bf \setword{SDP 1}{sdp:noh}}\end{center}\vspace{-.7cm}
\begin{align}
	\max \hspace{3ex} & \hspace{1ex}\lambda, & \nonumber \\
	\st \hspace{4ex}
	& \sum_{e\in E} x_{e}L_{e} \succeq \lambda \cdot \sum_{u,v \in V} y_{\{u,v\}}L_{u,v}, &\label{eq:lambdaconstraint} \\
	& \sum_{v} y_{\{u,v\}} \geq \max_{e\sim u} x_e \,  & \forall  u \in V, \label{eq:sumyconstraint} \\
	& \delta n \cdot y_{\{u,v\}}  \leq \sum_{w} y_{\{u,w\}}& \forall   u,v\in V,  \label{eq:maxyconstraint} \\
	& \bx,\by \geq 0, \bx\neq 0. & \nonumber
\end{align}
}}
\end{figure*}
Note that the first constraint of the SDP, \eqref{eq:lambdaconstraint}, is not convex. To make it convex, it is enough to solve the SDP with an explicit value of $\lambda$, and then run a binary search to maximize $\lambda$.

Let us show that \ref{sdp:noh} is a relaxation of \eqref{eq:benchmarkinducedexp}, i.e., its optimum value is at least $\lambda(\delta)$.
Let $S\subseteq V$ such that $|S|\geq \delta\cdot n$ be the set maximizing \eqref{eq:benchmarkinducedexp}.

Our intended integral solution is defined as follows:
 We let $x_e=1$ if both endpoints of $e$ are in $S$ and zero otherwise, and we let $y_{\{u,v\}}=\frac1{|S|}$ if $u,v\in S$ and zero otherwise. 
Let us verify the first constraint of the SDP and the rest are easy to check. It is easy to see that 
$\sum_e x_e L_e = L_{G[S]}$ is the Laplacian of the induced graph $G[S]$. On the other hand, $\sum_{u,v} y_{\{u,v\}}L_{u,v}=\frac{L_{K_S}}{|S|}$
is the Laplacian of a complete graph  on $S$ scaled by $\frac1{|S|}$. 
Therefore, the first constraint of the SDP follows by \eqref{eq:GKSlambda2}.

We can strengthen the above relaxation (and our results) when the optimum induced expander is loosely connected to the outside. That is, suppose the optimum set $S$ of \eqref{eq:benchmarkinducedexp} satisfies
$\h(S) \leq \h^*.$
Let $E(S)=\{\{u,v\}:u,v\in S\}$ be the set of edges between the vertices of $S$.
Then, by the above inequality,
$$ |E(S)|=\frac{\Delta\cdot |S|-|E(S,\overline{S})|}{2} \geq |S|\cdot \frac{(\Delta-\h^*)}{2}.$$
So, we can strengthen \ref{sdp:noh} by adding a relaxation of the above inequality. See \ref{sdp:withh} for the new SDP. 
Note that, although the constraint \eqref{eq:hconstraint} is nonlinear, we can make it linear by introducing new variables $\{z_u\}_{u\in V}$ where  $x_e\leq z_u$ for all $e\sim u$.
It is an easy exercise that for a set $S\subset V$, the vector solution $\bx,\by$ that we constructed in the preceding paragraphs satisfy constraint \eqref{eq:hconstraint}.
\begin{figure*}
\centering
\fbox{\parbox{4in}{ 
\begin{center}{\bf \setword{SDP 2}{sdp:withh}}\end{center}\vspace{-.7cm}
\begin{align}
	\max\hspace{3ex}& \hspace{1ex}\lambda, &\nonumber\\
	\st\hspace{4ex}& \bx,\by \text{ satisfy constraints of \ref{sdp:noh},} & \nonumber \\
	& \sum_{e} x_e \geq \frac{(\Delta-\h^*)}{2}\cdot\sum_u \max_{e\sim u} x_e. & \label{eq:hconstraint}
\end{align}
}}
\end{figure*}

\subsection{Our Results}
\label{subsec:ourresults}
In this subsection we describe the main results of this paper. 
Before describing our main result, we design a simple greedy algorithm analogous to the work of Kannan, Vempala and Vetta \cite{KVV04} (and \cite{Tre05,AO14}) for the largest induced expander problem. 
\begin{theorem}
\label{thm:greedy}
There is a polynomial time algorithm that for any $\Delta$-regular graph $G$, $\delta < 1$, returns a set $S$ of size $|S| \geq 3\delta\cdot n/8$ and spectral expansion 
$$\lambda_2(L_{G[S]}) \gtrsim \frac{\delta^2\cdot \lambda^2}{\Delta\log^2 \delta},$$
where $\lambda = \lambda(\delta)$.
\end{theorem}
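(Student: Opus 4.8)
The plan is to implement a recursive sparsest-cut-style partitioning à la Kannan--Vempala--Vetta, but stopping early and carefully tracking how much of the optimal induced expander $S^*$ (the maximizer of \eqref{eq:benchmarkinducedexp}, of size $\geq \delta n$ and induced spectral expansion $\lambda$) survives intact. First I would convert the spectral guarantee on $S^*$ into a combinatorial one: by Cheeger's inequality (Theorem~\ref{thm:cheeger}) applied inside $G[S^*]$, every subset $T\subseteq S^*$ with $|T|\leq |S^*|/2$ has $|E_{G[S^*]}(T, S^*\setminus T)|\geq \tfrac{\lambda}{2}|T|$, so $G[S^*]$ has no sparse internal cut. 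The key structural observation is then: whenever we cut the current vertex set $U$ along a cut $(A,\overline A)$ with combinatorial expansion $h(A)$ small (say $h(A)\le c\cdot\delta\lambda$ for a suitable constant $c$), the cut edges that fall inside $S^*\cap U$ are few, so one of the two sides contains almost all of $S^*\cap U$ and loses only a small constant fraction of it; the recursion depth needed to drive $|U|$ below the stopping threshold is $O(\log \delta^{-1})$ — wait, more precisely $O(\log(1/\delta))$ — rounds, so the total loss is controlled.

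The algorithm: maintain a single ``active'' piece $U$, initialized to $V$. While $U$ admits a cut $(A,\overline A)$ (with $|A|\le |U|/2$) of combinatorial expansion $h(A)\le \alpha$ for a threshold $\alpha \asymp \delta\lambda/\log\delta^{-1}$ — found via an $O(\sqrt{\log n})$- or even $O(1)$-factor sparsest-cut approximation, which suffices since we only lose constants — replace $U$ by whichever of $A,\overline A$ currently contains the larger share of $S^*\cap U$. (Of course the algorithm doesn't know $S^*$; instead it recurses on \emph{both} sides, building a laminar family, and at the end outputs the best leaf; the analysis follows the $S^*$-heavy branch.) Stop when no such sparse cut exists, or when $|U|$ drops below $\tfrac{3\delta}{8}n$; in the latter event I must argue this never triggers because each step keeps $|S^*\cap U|$ a large fraction of $\delta n$, hence $|U|\geq |S^*\cap U|$ stays above $\tfrac{3\delta}{8}n$. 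For the output: when we stop, $U$ has combinatorial expansion $h(U)\geq \alpha \asymp \delta\lambda/\log\delta^{-1}$, and applying Cheeger's inequality \emph{in the other direction} to $G[U]$ gives $\lambda_2(L_{G[U]})\geq h(G[U])^2/(2\Delta) \gtrsim \delta^2\lambda^2/(\Delta\log^2\delta)$, matching the claimed bound; and $|U|\geq \tfrac{3\delta}{8}n$ as required.

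The main obstacle is the loss-accounting in the recursion: I need that after $k=O(\log\delta^{-1})$ cuts, $|S^*\cap U|$ is still $\Omega(\delta n)$ and in fact $U$ has large \emph{combinatorial} expansion. The subtlety is that removing a few cut edges incident to $S^*\cap U$ may lower its induced expansion; so I should track the quantity $|E_{G[S^*\cap U]}(\cdot,\cdot)|$ and use that we only ever delete $\le \alpha|U| \le \alpha\cdot|S^*\cap U|/\beta$ edges in a step (for the constant fraction $\beta$ surviving), summing to a total edge deletion that is a small fraction of $|S^*|\cdot\lambda/2$, so the ``residual'' set retains Cheeger-type expansion $\Omega(\lambda)$ internally, which in particular means it can't be cut too sparsely — giving the termination and the final $h(G[U])$ lower bound. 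A secondary point: each cut halves $|U|$ (since $|A|\le|U|/2$), so after $\log_2(1/\delta)+O(1)$ steps $|U|$ would be $\le\delta n$, contradicting $|U|\ge|S^*\cap U|\ge\Omega(\delta n)$ unless we stopped earlier because no sparse cut exists — and that's exactly the good case. Balancing the threshold $\alpha$ against the number of rounds to make both the edge-loss bound and the final expansion come out to $\delta^2\lambda^2/(\Delta\log^2\delta)$ is the routine-but-delicate optimization, which I'd defer to the detailed proof.
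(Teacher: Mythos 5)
Your high-level plan --- recursive sparse-cut partitioning plus a Cheeger argument applied to the optimal expander $S^*$ --- is the same as the paper's (Lemma~\ref{lem:graphdecomposition} and the proof of Theorem~\ref{thm:greedy}), and the final arithmetic ($h\gtrsim\delta\lambda/\log\delta^{-1}$ then $\lambda_2\gtrsim h^2/\Delta$) matches. But two places in the accounting do not go through as written.

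First, the recursion-depth bound. You write that ``each cut halves $|U|$ (since $|A|\le|U|/2$), so after $\log_2(1/\delta)+O(1)$ steps $|U|$ would be $\le\delta n$.'' This is false: a sparsest-cut oracle is not a balanced-cut oracle, and if you keep the side $\overline A$ (because that is where $S^*\cap U$ concentrates) then $|U|$ shrinks only by $|A|$, which can be a single vertex. Nothing forces the heavy-$S^*$ side to be the small side. The number of rounds can then be $\Theta(n)$ rather than $O(\log\delta^{-1})$, which destroys both your loss budget and your final choice of $\alpha$. The paper handles this with the spectral bisection subroutine (Algorithm~\ref{alg:minbisection}, Lemma~\ref{lem:bisection}), which repeatedly peels sparse cuts until a constant fraction of the current piece is removed; that balance is what makes the $2\log(1/\delta)$ level bound of Lemma~\ref{lem:graphdecomposition} valid. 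You need some such balance mechanism, not raw sparsest cut.

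Second, the per-step ``constant-fraction survives'' claim. Cheeger applied inside $G[S^*]$ controls cuts of $S^*$ in $G[S^*]$. The cut $(A,\overline A)$ of $U$ at round $k$ induces a bipartition of $S^*$ into $A\cap S^*$ and $S^*\setminus A$, and Cheeger gives
\begin{equation*}
\frac{\lambda}{2}\,\min\bigl(|A\cap S^*|,\,|S^*\setminus A|\bigr)\;\le\;\bigl|E_{G[S^*]}(A\cap S^*,\,S^*\setminus A)\bigr|,
\end{equation*}
but the right-hand side counts not only edges crossing the current cut $(A,\overline A)$ but also edges from $A\cap S^*$ to $S^*\setminus U$, i.e.\ edges already severed in earlier rounds. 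So you cannot conclude from ``the current cut has $\le\alpha|U|/2$ edges'' that one side of $S^*\cap U$ is tiny; you would need a cumulative budget over all rounds, at which point you are re-deriving the paper's global argument. Relatedly, the estimate ``$\alpha|U|\le\alpha|S^*\cap U|/\beta$'' needs $|U|\le|S^*\cap U|/\beta$, which certainly fails at the start ($U=V$ but $|S^*\cap U|\approx\delta n$). The paper sidesteps all per-branch bookkeeping by constructing the entire partition $\mathcal P$ first, bounding the total number of cut edges once via Lemma~\ref{lem:graphdecomposition}, and then applying the easy direction of Cheeger to $S^*$ summed over the pieces $\{T\cap S^*\}_{T\in\mathcal P}$ (each of which has size $\le\delta n/2\le|S^*|/2$); the two bounds collide unless $\eps\gtrsim\delta^2\lambda^2/(\Delta^2\log^2\delta)$. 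If you recast your loss-accounting as this single global edge count, both gaps disappear and the proof closes cleanly.

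A minor further point: you detect termination by ``no sparse cut of expansion $\le\alpha$ exists'' and then apply the hard direction of Cheeger, paying the approximation factor $C$ of the sparsest-cut oracle in the final bound. The paper instead checks $\lambda_2(G[S])\ge\eps\Delta$ directly inside the recursion, which avoids any approximation-factor loss and keeps the constants clean.
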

The algorithm simply uses repeated applications of the spectral minimum bisection algorithm to find an induced expander. See \autoref{sec:greedy} for the proof of the above theorem. The main downside of the above result is the polynomial dependency on $\delta$ which is essential to the greedy algorithm (see \autoref{prop:tightgreedy}). 
In particular, if $\delta=1/\sqrt{n}$ (and $\Delta=O(1)$), any connected subgraph of $G$ of size $\Omega(\sqrt{n})$ is a $\delta^2$-expander.

In our main result, we use \ref{sdp:noh} to exponentially improve the polynomial dependency on $\delta$ in the greedy algorithm. 
We design a bicriteria approximation algorithm for $\lambda(\delta)$; we show that any feasible solution of \ref{sdp:noh} can be rounded to a set of size $\Omega(\delta n)$ and spectral expansion $\frac{\lambda(\delta)}{\log^2\delta \cdot \exp(\Delta/\lambda)}$.

\begin{restatable}{theorem}{mainthm}
\label{thm:maintheoremwithoutoutsideexpansion}
There is a polynomial time algorithm that for any $\Delta$-regular graph $G$, $\delta>0$ and any feasible solution $\lambda,\bx,\by$ of \ref{sdp:noh},
 returns a set $S$ of size $|S|\geq 3\delta\cdot n/8$ and spectral expansion
$$ \lambda_2(L_{G[S]}) \gtrsim \dfrac{{\lambda}}{\log^2 \delta\cdot \exp(O(\Delta/\lambda))},$$ 
\end{restatable}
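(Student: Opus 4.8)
The plan is to give a rounding algorithm for \ref{sdp:noh} that, given a feasible solution $\lambda,\bx,\by$, extracts a vertex set $S$ of size $\Omega(\delta n)$ whose induced spectral expansion is $\lambda/(\log^2\delta\cdot\exp(O(\Delta/\lambda)))$. The constraint \eqref{eq:lambdaconstraint} says that on the support of $\bx$ the edge-weighted Laplacian spectrally dominates $\lambda$ times a ``complete-graph-like'' Laplacian $\sum_{u,v}y_{\{u,v\}}L_{u,v}$; constraint \eqref{eq:maxyconstraint} forces the $y$-mass incident to any vertex to be spread over $\Omega(\delta n)$ partners, so $\by$ behaves like the indicator of a large near-clique; and \eqref{eq:sumyconstraint} links the $y$-mass at $u$ to the largest $x$ on an edge at $u$. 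The first move is to threshold: replace the fractional $\bx$ by a sequence of level sets $V_t=\{u:\max_{e\sim u}x_e\geq t\}$ (equivalently threshold the auxiliary variables $z_u$), and argue by an averaging/coarea argument over $t$ that some level set carries a constant fraction of the relevant SDP mass while remaining an integral graph. I expect to normalize so that the support has $\Theta(\delta n)$ vertices after this step, using \eqref{eq:maxyconstraint} to lower bound the size.

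Next I would pass to a Rayleigh-quotient statement. Feasibility of the SDP is an inequality between quadratic forms, so for every test vector $f\in\R^V$ supported on the candidate set $S$ we get $\sum_{e\in E(S)} x_e\langle f,L_e f\rangle \geq \lambda\sum_{u,v\in S} y_{\{u,v\}}\langle f,L_{u,v}f\rangle$. The right-hand side, because the $y$'s look like a scaled clique indicator on $\Omega(\delta n)$ vertices, is (up to constants and a $\log$-type loss) a lower bound of the form $\lambda\cdot c\cdot\sum_{u\in S}(f(u)-\bar f_S)^2$ where $\bar f_S$ is the average of $f$ over $S$ — this is exactly the complete-graph Laplacian trick used to verify the relaxation, run in reverse. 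The left-hand side is at most $\Delta$ times the ordinary induced Laplacian quadratic form $\langle f, L_{G[S]}f\rangle$ (each edge appears with $x_e\le 1$, and degrees are bounded by $\Delta$), but that naive bound only yields $\lambda_2(L_{G[S]})\gtrsim \lambda/\Delta$, which is too weak by an exponential factor when $\lambda$ is a constant fraction of $\Delta$. So the crux is to avoid this lossy step.

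The hard part — and the reason for the $\exp(O(\Delta/\lambda))$ factor — is bridging from the spectral-domination inequality to an honest Cheeger-type bound on $G[S]$ without losing a full factor of $\Delta$. The plan here is to run a random-walk / heat-kernel argument on $G[S]$: the SDP inequality controls the Dirichlet form of $L_{G[S]}$ relative to a near-complete graph, which by a standard comparison (à la the Markov-chain comparison of Diaconis–Saloff-Coste) bounds the \emph{relaxation time}, or equivalently the mixing time, of the lazy walk on $G[S]$ — but the comparison constant between a single graph edge and a ``clique edge'' routed through it degrades geometrically in the path lengths one is forced to use, and in a $\Delta$-regular expander the relevant path lengths scale like $\Delta/\lambda$, producing the $\exp(O(\Delta/\lambda))$. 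Concretely I would: (i) use the SDP to show the lazy walk on the thresholded set $S$ contracts the variance of any test function at a rate $\gtrsim \lambda/\exp(O(\Delta/\lambda))$ per step, amortized over the $\Omega(\delta n)$ clique partners so that a $\log\delta$-type loss enters; (ii) convert this contraction into a bound $\lambda_2(L_{G[S]})\gtrsim \lambda/(\log^2\delta\cdot\exp(O(\Delta/\lambda)))$ via Cheeger's inequality (\autoref{thm:cheeger}) applied to $G[S]$, possibly after one more round of removing a small fraction of low-degree or poorly-connected vertices to boost the minimum degree in $G[S]$ (which is what costs the second factor of $\log\delta$). Finally I would track the size through all the pruning steps to confirm $|S|\geq 3\delta n/8$, mirroring the bookkeeping in \autoref{thm:greedy}, and check that every step is polynomial time (the SDP is solved once with a fixed $\lambda$ and binary search, and the thresholding and pruning are combinatorial). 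The main obstacle, to reiterate, is step (i): getting any expansion bound that is a function of $\lambda$ alone (even with the $\exp(\Delta/\lambda)$ penalty) rather than the trivial $\lambda/\Delta$, which requires exploiting the full strength of the spectral-domination constraint \eqref{eq:lambdaconstraint} rather than just its diagonal.
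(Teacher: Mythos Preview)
Your proposal has a genuine gap at the step you yourself flag as ``the hard part.'' The paper's argument does not use random walks, heat kernels, or Diaconis--Saloff-Coste comparison at all, and I do not see how your sketch of that step can be made to work: you want to route clique edges through $G[S]$ along paths of length $O(\Delta/\lambda)$, but you have no a~priori control on diameter or connectivity of $G[S]$ --- that is precisely what you are trying to establish. The reasoning is circular as stated.

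There are two concrete missing ideas. First, the right way to exploit \eqref{eq:lambdaconstraint}--\eqref{eq:maxyconstraint} is not via Rayleigh quotients for arbitrary test vectors but via indicator vectors of \emph{small} sets: one shows that for every $T$ with $|T|\le \delta n/2$ the \emph{weighted} expansion satisfies $h_{\bx}(T)\ge\lambda/2$ (because $h_{\by}(T)\ge 1/2$ from the spreading constraint, and $\bx(T,\overline T)\ge\lambda\,\by(T,\overline T)$ from the PSD constraint applied to $\bone_T$). This is the only place the SDP is used. Second, your single-threshold level set $V_t=\{u:z_u\ge t\}$ has unbounded width $\max_u z_u/\min_u z_u$, so you cannot pass from weighted to unweighted expansion. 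The paper instead takes a \emph{window} $\{u:e^t\le z_u\le e^{t+w}\}$; a coarea/averaging argument over $t$ gives a set $S$ of width $\le e^w$ with $h_{\bx}(S)\lesssim \Delta/w$. Choosing $w\asymp\Delta/\lambda$ forces $h_{\bx}(S)<\lambda/2$, hence $|S|>\delta n/2$ by the first lemma, and the width is exactly $\exp(O(\Delta/\lambda))$ --- this is the true source of that factor, not any path-length geometry. Finally one runs recursive spectral bisection on $G[S]$ down to pieces of size $\le\delta n/2$; if no $\eps\Delta$-expander is found, an averaging argument (using that width $\le e^w$ converts unweighted to weighted expansion at cost $e^w$) produces a small piece $T$ with $h_{\bx}(T)\lesssim e^w\sqrt\eps\,\Delta\log(1/\delta)+h_{\bx}(S)<\lambda/2$, contradicting the first lemma. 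Solving for $\eps$ gives the claimed bound, with the $\log^2\delta$ coming from the recursion depth and the Cheeger square-root.
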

In the regime where $\lambda(\delta)=\Omega(\Delta)$ the approximation factor of the above theorem is $\log^2\delta$ (up to constants).
As a simple corollary, because of logarithmic dependency on $\delta$, we can use the above algorithm to find an $\Omega(\Delta/\polylog(n))$-expander of size $n^{\Omega(1)}$ in $G$ assuming the existence of an $\Omega(\Delta)$-expander of a similar size.

The $\log(1/\delta)$ loss in the above theorem essentially follows because of the connection to the small set expansion problem.
To make this connection more rigorous, we complement the above theorem and we show that, assuming $\Delta$ is sufficiently large, 
 the integrality gap of \ref{sdp:noh} is at least $\Omega(\log 1/\delta)$.
\begin{theorem}
\label{thm:IG1}
The integrality gap of the \ref{sdp:noh} is $\Omega (\min(\log 1/\delta,\Delta))$.
\end{theorem}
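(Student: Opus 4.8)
This is an existence statement, so the plan is to construct, for every $\delta$ and every $\Delta$ above a suitable threshold, a $\Delta$-regular graph $G$ together with a feasible point $(\lambda,\bx,\by)$ of \ref{sdp:noh} whose value exceeds $\lambda(\delta)$ by a factor $\Omega(\min(\log(1/\delta),\Delta))$. As the remark following the Fact suggests, the instance to use is drawn from the family of graphs underlying integrality gaps and hardness for \emph{small set expansion}: a noisy-hypercube / short-code style Cayley graph on $\mathbb F_2^{d}$ whose connection set is the set of all weight-$w$ vectors, so that $n=2^{d}$, $\Delta=\binom{d}{w}$, and, by the Krawtchouk-polynomial formula for its spectrum, $\lambda_2(L_G)=\Delta-K_w(1)=\tfrac{2w}{d}\,\Delta$. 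One takes $d$ large enough that $\Delta$ is above the threshold, sets $\delta=2^{-k}$, and tunes $w/d\asymp 1/\log(1/\delta)$ (with $w=1$, $d\asymp\Delta$ in the regime $\Delta\le\log(1/\delta)$), so that the global spectral gap is only $\lambda_2(L_G)=\Theta\!\big(\Delta/\min(\log(1/\delta),\Delta)\big)$.

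The first half of the argument is the \emph{integral} upper bound $\lambda(\delta)=O\!\big(\Delta/\min(\log(1/\delta),\Delta)\big)$: one must show that \emph{no} set $S$ of size $\ge\delta n$ --- not merely the obvious sub-cubes --- induces a subgraph of spectral expansion larger than the weak global one. The approach is a Fourier-analytic test function for $L_{G[S]}$: given $S$ of density $\mu=|S|/n\ge\delta$, average the $S$-centered restrictions of the level-one characters $\chi_{\{i\}}$ and control their deviation from constancy on $S$ via the level-one (KKL-type) inequality $\sum_{|T|=1}\widehat{1_S}(T)^2\le 2\mu^2\ln(1/\mu)$. Hypercontractivity of the noise operator --- the tool behind that inequality --- converts the weak global gap into $\lambda_2(L_{G[S]})\le\tfrac wd\Delta\cdot\polylog(1/\mu)$, which under the parameter choice is $O(\Delta/\log(1/\delta))$. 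This is precisely where the ``$\log(1/\delta)$'' enters.

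The second half --- and the main obstacle --- is the \emph{fractional} bound, namely that the optimum of \ref{sdp:noh} is $\Omega(\Delta)$, which forces the gap. The conceptual point is that \ref{sdp:noh} never requires the certificate $\sum_e x_e L_e$ to be (a scaling of) a genuine \emph{induced} subgraph; it is pinned down only through the PSD inequality \eqref{eq:lambdaconstraint} and the degree-type inequalities \eqref{eq:sumyconstraint}, \eqref{eq:maxyconstraint}. When $\by$ is supported on a set $C$ with $|C|\asymp\delta n$, so that $\sum_{u,v}y_{\{u,v\}}L_{u,v}=\tfrac1{|C|}L_{K_C}$ and \eqref{eq:sumyconstraint}--\eqref{eq:maxyconstraint} hold with essentially equality, the largest $\lambda$ attainable with a fixed $\bx$ is the second eigenvalue of the \emph{Schur complement} of $\sum_e x_e L_e$ onto $C$ --- an \emph{effective-resistance} quantity that routes through the rest of the graph --- rather than $\lambda_2$ of any subgraph induced on $C$. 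Accordingly, one exhibits an instance in which some $C$ of size $\asymp\delta n$ has small pairwise effective resistances in $G$ (hence Schur-complement spectral gap $\Omega(\Delta)$) while, by the first half, every induced subgraph of size $\ge\delta n$ is weak; then one builds the matching $\bx$, supported on the edges realizing those short routes and smoothed just enough that \eqref{eq:sumyconstraint} holds at the vertices outside $C$ without pushing $\max_e x_e$ above its normalization. Verifying that all three SDP constraints hold simultaneously for this $(\bx,\by)$ --- i.e.\ reconciling ``every induced subgraph is weak'' with ``the relaxation is fooled to $\Omega(\Delta)$'' --- is the delicate step, and is exactly where the $\log(1/\delta)$ separation between the SDP value and $\lambda(\delta)$ is created. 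Combining the two halves gives the claimed integrality gap of $\Omega(\min(\log(1/\delta),\Delta))$.
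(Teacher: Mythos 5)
Your proposal diverges from the paper's on both the construction and the two halves of the argument, and on the fractional half there is a genuine gap that I think is fatal to the plan as stated.

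On the \emph{integral} side, the paper does not use a noisy-hypercube Cayley graph or any hypercontractivity. The instance is a $\Delta$-dimensional hypercube $H$ in which each vertex $i$ is blown up into a cloud $B_i$ of $\delta n$ vertices (so $\delta=2^{-\Delta}$, which is where $\min(\log(1/\delta),\Delta)$ comes from), with a weighted complete bipartite graph of edge-weight $1/(\delta n)$ between clouds of adjacent $i,j$ and no edges inside a cloud. The integral upper bound $\lambda(\delta)=O(1)$ then follows from the \emph{easy} direction of Cheeger plus a dimension-cut: any set $S$ is split by some dimension cut of $H$ into two pieces each of which sends only a $1/\Delta$ fraction of its incident edge weight out, so $h(G[S])=O(1)$. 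Your Fourier/KKL route may well be salvageable, but it is substantially more machinery than is needed, and as sketched it does not actually produce a test function or a bound on $\lambda_2(L_{G[S]})$ — ``$S$-centered restrictions of level-one characters'' is only a plan.

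On the \emph{fractional} side there is a real problem with the mechanism you propose. You want $\by$ supported on a single set $C$ with $|C|\asymp\delta n$ so that $\sum y_{\{u,v\}}L_{u,v}\approx L_{K_C}/|C|$, and then to let $\bx$ ``route through the rest of the graph'' so that the Schur complement onto $C$ has spectral gap $\Omega(\Delta)$. But constraint \eqref{eq:sumyconstraint} couples the supports: for $u\notin C$, $\sum_v y_{\{u,v\}}=0$ forces $\max_{e\sim u}x_e=0$, i.e.\ $\bx$ vanishes on every edge touching $V\setminus C$. Then $\sum_e x_e L_e$ is a Laplacian supported inside $C$, its edge weights are at most $1$ by \eqref{eq:sumyconstraint} at $u\in C$, and \eqref{eq:lambdaconstraint} reduces to $L_{G_\bx[C]}\succeq\lambda L_{K_C}/|C|$, i.e.\ $\lambda\le\lambda_2(L_{G_\bx[C]})\le\lambda_2(L_{G[C]})\le\lambda(\delta)$ — there is no gap. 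Leaking a little $\by$-mass outside $C$ to ``smooth'' \eqref{eq:sumyconstraint}, as you suggest, simultaneously changes the right-hand side of \eqref{eq:lambdaconstraint}; reconciling these is exactly the step you flag as delicate and do not carry out, and the back-of-the-envelope above says it cannot be done with $\by$ essentially concentrated on one set. The paper sidesteps this entirely: it sets $x_e=1$ on every edge of $G$ and spreads $\by$ over \emph{all} $2^{\Delta}$ clouds, $y_{\{u,v\}}=1/(\delta n)$ whenever $u,v$ lie in the same $B_i$, so that $G_\by$ is a disjoint union of $\delta n$-cliques sitting on independent sets. Constraint \eqref{eq:lambdaconstraint} is then verified \emph{locally}: for each hypercube edge $\{i,j\}$ the complete bipartite graph on $B_i\cup B_j$ satisfies $L_{G[B_i\cup B_j]}\succeq\Omega(L_{K_{2\delta n}}/(\delta n))$, and summing over the $\Delta$ hypercube edges incident to each $i$ yields $L_G\succeq\Omega(\Delta)L_{G_\by}$. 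This ``many small cliques certified locally'' structure is the crux, and it is the piece your effective-resistance plan is missing.
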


Our integrality gap example is made up of a hypercube of $\log(1/\delta)$ dimensions where every vertex is blown up to a cloud of $\delta n$ vertices. For every edge of the original hypercube, we add a complete bipartite graph of weight $1/\delta n$ between the vertices of the corresponding clouds.

	

Furthermore, we show that in certain regimes we can improve the exponential dependency on $\Delta/\lambda$ assuming the optimum solution of the largest induced expander problem has a small (outside) expansion. 
\begin{restatable}{theorem}{thmmainnoexp}
\label{thm:mainwithoutsideexpansion}
	There is a polynomial time algorithm that for any $\Delta$-regular graph, $\delta>0$, $h^*\leq \Delta(1-\frac2e)$, and any feasible solution $\lambda,\bx,\by$ of \ref{sdp:withh}, returns a set $S$ of size at least $|S|\geq 3\delta \cdot n/8$, and spectral expansion
	$$ \lambda_2(L_{G[S]}) \gtrsim \frac{\lambda^2}{\Delta\log^2 \delta\cdot (1+2h^*/\Delta)^{O(\Delta/\lambda)}}. $$
\end{restatable}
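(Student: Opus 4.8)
The plan is to reuse the rounding algorithm behind \autoref{thm:maintheoremwithoutoutsideexpansion} essentially verbatim, now applied to a feasible solution $(\lambda,\bx,\by)$ of \ref{sdp:withh}, and to re-examine its analysis in the presence of the extra constraint \eqref{eq:hconstraint}. With $z_u=\max_{e\sim u}x_e$ playing the role of a fractional indicator that $u$ lies in the target expander, \eqref{eq:hconstraint} says that the fractional subgraph $\{x_e\}$ has average fractional degree at least $(\Delta-\h^*)/2$ per fractional vertex; equivalently, of the $\Delta\sum_u z_u$ fractional half-edges touching the support, at most a $\h^*/\Delta$ fraction leave it. This is exactly the fractional analogue of the hypothesis $\h(S)\le\h^*$, and it is the one feature of \ref{sdp:withh} that the rounding of \autoref{thm:maintheoremwithoutoutsideexpansion} does not use.

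First I would isolate the single place in the proof of \autoref{thm:maintheoremwithoutoutsideexpansion} where the factor $\exp(O(\Delta/\lambda))$ is paid. In that argument one grows a region of vertices over $\Theta(\Delta/\lambda)$ rounds (the scale $\Delta/\lambda$ being the natural diameter/mixing length of a $\Delta$-regular $\lambda$-expander), and at each round a slack quantity --- roughly the mass of fractional edges that can escape the current region --- is controlled only by the trivial bound ``every vertex has degree $\Delta$'', which multiplies the accumulated error by a universal constant $>1$ per round, hence by $\exp(\Theta(\Delta/\lambda))$ in total. Under \eqref{eq:hconstraint}, the escaping mass is instead bounded by the edge boundary of a \emph{dense} region, which is only a $(1+2\h^*/\Delta)$ fraction of that region's internal mass; carrying this through the $O(\Delta/\lambda)$ rounds replaces the base of the exponential by $1+2\h^*/\Delta$. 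The hypothesis $\h^*\le\Delta(1-2/e)$ is exactly what keeps the per-round survival probability of a fractional vertex bounded away from $0$, so that this accounting closes.

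The reason the improved estimate is quadratic in $\lambda$ is that the density constraint is a cut-based statement, so the most natural way to exploit it is to first produce a set $S$ with large \emph{combinatorial} induced expansion, of the form $\h(G[S])\gtrsim \lambda/(\log\delta\cdot(1+2\h^*/\Delta)^{O(\Delta/\lambda)})$, and then invoke Cheeger's inequality (\autoref{thm:cheeger}) to pass to $\lambda_2(L_{G[S]})\gtrsim \h(G[S])^2/\Delta$; the squaring turns the single $\log(1/\delta)$ into the $\log^2\delta$ appearing in the statement and is what costs the extra $\lambda/\Delta$ compared with the spectral certificate produced directly in \autoref{thm:maintheoremwithoutoutsideexpansion}. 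The size guarantee $|S|\ge 3\delta n/8$ is inherited unchanged, since it depends only on the spreading constraints \eqref{eq:sumyconstraint} and \eqref{eq:maxyconstraint} on $\by$ and on the region-growing step, none of which is touched.

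The step I expect to be the main obstacle is that, unlike the constraints of \ref{sdp:noh}, constraint \eqref{eq:hconstraint} is fragile under restriction to a sub-region: every time the algorithm discards fractional vertices the average fractional degree can drop, so density is not automatically preserved along the recursion. The resolution has to be a coupled induction --- simultaneously maintaining a lower bound on expansion and a lower bound on density of the current sub-solution --- in which any loss of density is charged against the fractional edges leaving the chosen region, which are few precisely because of the density one is trying to keep. Getting this circular-looking bookkeeping to go through, and extracting from it the clean closed form $(1+2\h^*/\Delta)^{O(\Delta/\lambda)}$, is where the real work lies; once density is known to survive each round, substituting it into the escaping-mass bound above and finishing with Cheeger is routine.
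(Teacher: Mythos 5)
Your high-level plan (rerun the rounding behind \autoref{thm:maintheoremwithoutoutsideexpansion} and use constraint \eqref{eq:hconstraint} to tighten one step) is the right one, but the mechanism you describe for where the $\exp(O(\Delta/\lambda))$ factor arises, and therefore how to attack it, is not the paper's, and I don't think it reflects the actual structure of the proof. There is no region-growing over $\Theta(\Delta/\lambda)$ rounds anywhere in the argument, no per-round multiplicative error, and no coupled induction on ``expansion and density of the current sub-solution.'' The exponential factor in \autoref{thm:maintheoremwithoutoutsideexpansion} comes entirely from \autoref{lem:nonexpandingset}: one places each vertex $v$ at $\log z_v$ on the line, picks a random window of logarithmic length $w$, and gets a set $S$ of width $e^w$ with $h_\bx(S)\le 2\alpha\log(\Delta/\alpha)/w$, where $\alpha=\sum_e x_e/\bz(V)$. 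One then needs $h_\bx(S)\lesssim\lambda$, which forces $w\gtrsim\alpha\log(\Delta/\alpha)/\lambda$, and in turn the width $e^w$ enters squared as a loss in $\eps$ through Case \ref{case:nonexpanding} of \autoref{lem:findingexpander}. That is the whole source of the $\exp(O(\Delta/\lambda))$; the spectral bisection recursion contributes only the $\log^2\delta$ (its $\log(1/\delta)$ depth, squared because of Cheeger inside \autoref{lem:bisection}), and that factor is already present in \autoref{thm:maintheoremwithoutoutsideexpansion} --- it is not a new cost introduced by passing to a combinatorial certificate at the end, as you suggest.

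Consequently, the worry you flag as ``the main obstacle'' --- that \eqref{eq:hconstraint} is fragile under restriction to a sub-region and has to be propagated through the recursion --- does not arise. Constraint \eqref{eq:hconstraint} is invoked exactly once, globally, before any partitioning: it says $\alpha=\sum_e x_e/\sum_u\max_{e\sim u}x_e\ge(\Delta-h^*)/2$, and since $s\mapsto s\log(\Delta/s)$ is decreasing for $s\ge\Delta/e$ (this is what the hypothesis $h^*\le\Delta(1-2/e)$ buys, not a ``survival probability'' bound), plugging this lower bound on $\alpha$ into \autoref{lem:nonexpandingset} gives $h_\bx(S)\le\frac{\Delta-h^*}{w}\log\frac{2\Delta}{\Delta-h^*}$. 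This lets one take $w=\frac{4(\Delta-h^*)}{\lambda}\log\frac{2\Delta}{\Delta-h^*}$, which replaces the base of the exponential $e^w$ by $\frac{\Delta}{\Delta-h^*}\le 1+2h^*/\Delta$; the rest of the proof is verbatim \autoref{thm:maintheoremwithoutoutsideexpansion}, with the contradiction against \autoref{lem:expandingsmallsets} forcing Case \ref{case:expander} of \autoref{lem:findingexpander}. The $\lambda^2/\Delta$ in the bound is just $\eps\cdot\Delta$ written out without absorbing the polynomial factor into the exponential (as the paper does in \autoref{thm:maintheoremwithoutoutsideexpansion}); it is not produced by an extra application of Cheeger's inequality. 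So the gap in your proposal is concrete: you would be trying to fix a step of the argument that does not exist and would never locate the single inequality --- the bound on $\alpha$ in \autoref{lem:nonexpandingset} --- where the new constraint actually has to be spent.
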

As a corollary of the above theorem, assume that $h^*\leq O(\eps\cdot\Delta)$ and $\lambda(\delta)\geq \Omega(\eps\cdot \Delta)$, i.e., there is a set $S$ of size $|S|\geq \delta n$ such that $\lambda_2(L_{G[S]})=\lambda(\delta)\geq \Omega(\eps\cdot\Delta)$ and $h(S)\leq  O(\eps\cdot \Delta)$. Then, by the above theorem, in polynomial time we can find a set $T$ of size $\Omega(\delta n)$ such that
$$ \lambda_2(L_{G[T]}) \gtrsim \frac{\eps\cdot \lambda(\delta)}{\log^2(\delta)}.$$


\subsection{Preliminaries}
Throughout the paper, we use bold letters to represent vectors. 
Unless otherwise specified, we let $\bx,\by,\lambda>0$ 
represent a feasible solution of \ref{sdp:noh}. Note that since feasible solutions of \ref{sdp:withh} is a subset of feasible solutions of \ref{sdp:noh}, any result for feasible solutions of \ref{sdp:noh} extends to the solutions of \ref{sdp:withh}.
Without loss of generality, we extend $\bx$ to all unordered pairs $\{u,v\}$, and we let  $x_{\{u,v\}}=0$ whenever $\{u,v\}\notin E$.

For two disjoint subsets of vertices $S,T \subseteq V
$, we let 
$$E(S,T):=\{\{u,v\} \in E: u \in S, v \in T\}$$
be the edges connecting $S$ to $T$.
For a vector $\bx \in \R^{V\times V}$, we let  
$\bx(S,T):=\sum_{u\in S, v\in T} x_{\{u,v\}}$. 
We use $G_\bx$ to denote the  graph with vertex set $V$ where the weight of the edge connecting each pair of vertices $u,v$ is $x_{\{u,v\}}$.
Similarly, we use $G_\by$ to denote the graph weighted by vector $\by$. 

For any vertex $v \in V$, let  
$$z_v:=\max_{e \sim u} x_e$$
be the weight of $v$.
Observe that if $z_v=0$ then all edges incident to $v$ have weight $0$. It is easy to see that any feasible solution of the SDP remains feasible when we delete all vertices of weight zero.
Therefore, throughout the paper we assume that $z_v>0$ for all $v\in V$.

We define the \emph{width} of $S \subseteq V$ to be $\frac{\max_{v \in S} z_v}{\min_{v \in S}z_v}$. 
The weighted expansion of a set $S\subseteq V$ in $G_\bx$ (and $G_\by$) is the ratio of the sum of the weights of the edges in the cut $(S,\overline{S})$ to the sum of the 
weights of vertices of $S$,
$$ h_\bx(S) = \frac{\bx(S,\overline{S})}{\bz(S)}, h_\by(S)=\frac{\by(S,\overline{S})}{\bz(S)}.$$

\subsection{Background on spectral graph theory}
\label{sec:spectralgraphtheory}
Perhaps the most natural property 	of the Laplacian matrix is the simple description of their quadratic form. For any vector $\bof \in \R^V$, 
\begin{equation*}
\bof^\intercal L_G\bof = \sum_{\{u,v\} \in E} (f_u-f_v)^2.
\end{equation*}
Note that if $G$ is weighted every term in the RHS will be scaled by the weight of the edge $\{u,v\}$.
One simple consequence of the above identity is that the Laplacian is always a PSD matrix. 
A simple application of the above identity is that we can write the size of a cut $|E(S,\overline{S})|$ as a quadratic form. For $\bof=\bone_S$ we get,  
$$
|E(S,\overline{S})|=\bone_S^\intercal L_G \bone_S. 
$$ 

As alluded to in the introduction, the Cheeger's inequality relates the second eigenvalue of the Laplacian matrix to $\h(G)$. The left side of \eqref{eq:cheeger} is known as the \emph{easy} direction, and the the right side is the \emph{hard} direction. The proof of the hard direction follows by a simple rounding algorithm known as the \emph{spectral partitioning algorithm} which rounds the second eigenvector of the Laplacian matrix to a set $S$ of (size $|S|\leq |V|/2)$ and) expansion $O(\sqrt{\lambda_2\cdot \Delta})$. 
For the sake of completeness, here we describe the algorithm:
Let $\bof$ be the second eigenvector of $L_G$. Sort vertices based on $f_v$, and call them $v_1,v_2,\dots,v_n$. Return the best threshold cut, i.e., 
$$ \min_{1\leq i\leq n} \max(h(\{v_1,\dots,v_i\}), h(\{v_{i+1},\dots,v_n\})).$$

One can use repeated applications of the preceding algorithm to approximate the minimum bisection of a given graph $G$. See \autoref{alg:minbisection} for the details of the algorithm. 
\begin{lemma}
\label{lem:bisection}
 Let $G=(V,E)$ be a graph with maximum degree $\Delta$. For every $0< \epsilon <1$, Algorithm \ref{alg:minbisection} returns a set $S \subseteq{V}$ such that either $|S| \geq \frac{3}{4} |V|$ and $\lambda_2 (G[S]) \geq \epsilon \cdot \Delta$, or $\frac{|V|}{4} \leq |S| \leq \frac{3|V|}{4}$ and $\h(S) \leq \sqrt{2\epsilon}\cdot\Delta$.
\end{lemma}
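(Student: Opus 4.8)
The plan is to run the spectral partitioning algorithm repeatedly, peeling off sparse cuts until either we have removed too many vertices (in which case we have found a reasonably balanced sparse cut) or no more sparse cuts exist (in which case the remaining large set must be a spectral expander by the easy direction of Cheeger). Concretely, Algorithm~\ref{alg:minbisection} should maintain a current set $S$, initialized to $V$. At each step it computes the second eigenvector of $L_{G[S]}$ and the best threshold cut $(A, S\setminus A)$ with $|A|\le |S|/2$; if $\h_{G[S]}(A)\le \sqrt{2\epsilon}\cdot\Delta$ it deletes the smaller side (say $A$) from $S$ and repeats, and otherwise it halts and outputs the current $S$.

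First I would handle the halting case. If the algorithm stops with the current set $S$, then for the induced graph $G[S]$ the spectral partitioning algorithm failed to find a cut of expansion at most $\sqrt{2\epsilon}\cdot\Delta$; by the hard direction of Cheeger's inequality (Theorem~\ref{thm:cheeger}), which guarantees a cut of expansion $\sqrt{2\Delta\,\lambda_2(L_{G[S]})}$, this forces $\sqrt{2\Delta\,\lambda_2(L_{G[S]})} > \sqrt{2\epsilon}\cdot\Delta$, i.e.\ $\lambda_2(L_{G[S]}) > \epsilon\cdot\Delta$. It remains to argue $|S|\ge \tfrac34|V|$: this follows because the total number of vertices removed over all iterations is small. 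Each time we remove a set $A$, we remove at most $|A|$ vertices, and these $A$'s are drawn from disjoint parts of the shrinking set, so their sizes sum to at most $|V|$; but we only ever remove from the smaller side, so after removing a $(\le \tfrac12)$-fraction repeatedly, the cleanest bookkeeping is a potential/charging argument showing the cumulative removed mass is at most $\tfrac14|V|$ before a balanced cut is triggered. This is the point where I must be careful.

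The main obstacle is exactly this size accounting: a naive bound on the removed vertices only gives $|S|\ge 0$. The fix is to stop the peeling as soon as the \emph{accumulated} removed set $R=V\setminus S$ reaches size $\tfrac14|V|$, and argue that at that moment the cut $(R,S)$ is itself balanced and sparse. Balance is immediate ($\tfrac14|V|\le |R|\le \tfrac34|V|$, using that each removed piece had size at most half the then-current set, so $R$ doesn't overshoot $\tfrac34|V|$ — one should check the overshoot is controlled, e.g.\ by noting the last piece added had size at most $|V|/2$ so $|R|\le \tfrac14|V| + \tfrac12|V| \le \tfrac34|V|$). Sparsity follows by summing: $\h(R)=|E(R,S)|/|R| \le \big(\sum_i |E(A_i, S_i\setminus A_i)|\big)/|R| \le \big(\sqrt{2\epsilon}\cdot\Delta\sum_i |A_i|\big)/|R| = \sqrt{2\epsilon}\cdot\Delta\cdot(|R|/|R|) = \sqrt{2\epsilon}\cdot\Delta$, where I used $\sum_i|A_i|=|R|$ and that edges in $E(R,S)$ are a subset of the union of the cuts $E(A_i,S_i\setminus A_i)$. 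Then outputting $S$ in the halting case and $R$ (or $S$, whichever has size in $[\tfrac14|V|,\tfrac34|V|]$, here $R$) in the size-triggered case gives exactly the dichotomy in the statement.

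Finally I would note the polynomial running time: each iteration requires one eigenvector computation and a threshold scan, and there are at most $|V|$ iterations since $|S|$ strictly decreases, so the whole procedure runs in polynomial time. The only subtlety I expect to have to double-check is whether the expansion bound $\h(S)\le\sqrt{2\epsilon}\cdot\Delta$ should be applied with respect to $G[S_i]$ or $G$ — since deleting vertices only removes edges, $|E(A_i, S_i\setminus A_i)|$ computed in $G[S_i]$ equals the corresponding quantity in $G$, so the cuts compose correctly and no degradation occurs across iterations.
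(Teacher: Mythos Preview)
Your proposal is correct and follows essentially the same route as the paper: the paper's one-line justification cites Cheeger's inequality together with the fact $h(S\cup T)\le\max\{h(S),h(T)\}$, which your telescoping bound $|E(R,S)|\le\sum_i|E(A_i,S_i\setminus A_i)|\le\sqrt{2\epsilon}\,\Delta\,|R|$ unrolls explicitly (and correctly handles the $G[S_i]$-versus-$G$ subtlety you flagged). The only cosmetic difference is that the paper's Algorithm~\ref{alg:minbisection} tests $\lambda_2(G[S])\ge\epsilon\Delta$ directly rather than the threshold cut's expansion, so in the halting case both $|S|\ge\tfrac34|V|$ (immediate from the while-loop condition) and the spectral bound come for free without further argument.
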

The proof of the above lemma simply follows from \autoref{thm:cheeger} and the fact that for any two disjoint sets $S,T$, $h(S\cup T)\leq \max(\h(S),h(T))$.
\begin{algorithm}
\begin{algorithmic}[1]
	\Input A graph $G=(V,E)$ with maximum degree $\Delta$ and $0<\epsilon <1$.
	\Output A set $S$ s.t., either  $|S| \geq \frac{3|V|}{4}$ and $\lambda_2(G[S])\geq \epsilon\Delta$, or  $\frac{|V|}{4} \leq |S| \leq \frac{3|V|}{4}$ and $h(S) \leq \sqrt{2\epsilon}\cdot \Delta.$
	\State Let $S\leftarrow V$.
	\While{ $|S| \geq \frac{3|V|}{4}$}	
		\State If $\lambda_2(G[S]) \geq \epsilon\cdot \Delta$ then \textbf{return} $S$.
			 \label{line:returnexpander} 
		\State Otherwise, run the spectral partitioning on $G[S]$ and let $(T,S\setminus{T})$ be the output.
		 	\State Say $|T|<|S\setminus T|$. 
		 	let $S \leftarrow S \setminus{T}$. 
	\EndWhile
	\Return $V\setminus S$.
\end{algorithmic}
\caption{Spectral Bisection Algorithm}
\label{alg:minbisection}
\end{algorithm}


\section{Proof Overview}\label{sec:proofoverview}
Let $G_\bx,G_\by$ be the graphs weighted by the $\bx$ and $\by$ vectors respectively. In the first step of the proof, we exploit the main constraint of the SDP, i.e., \eqref{eq:lambdaconstraint}, to show that $G_\bx$ is a $\lambda/2$-small set  weighted expander, i.e., every set $S$ of size  $|S|\leq \delta n/2$ satisfies $\h_\bx(S) \geq \lambda/2$.
Although the proof of this statement is simple, it crucially uses the SDP constraints.
Firstly, we use \eqref{eq:lambdaconstraint} to show that for any set $S$, $ \h_\bx(S) \geq \lambda\cdot  \h_\by(S).$ Then, we use constraints \eqref{eq:sumyconstraint} and \eqref{eq:maxyconstraint} to show that $G_\by$ is a $1/2$-small set  weighted expander; this implies that $G_\bx$ is a $\lambda/2$-small set weighted expander (see \autoref{lem:expandingsmallsets} for the details of the proof). 
This statement  enlightens  a deep connection between our SDP and the small set expansion problem which may have further applications in understanding the computational complexity of the small set expansion problem. 

In the second step, we  essentially reduce the problem to the case where $\bz$ is almost a constant vector. 
The consequence is that when $\bz$ is a constant vector,  the weighted expansion is the same as (unweighted) expansion up to a normalization. Therefore,  we can conclude from the previous paragraph that $G$ is a small set expander.
More precisely, in the second step, we find a set $S\subseteq V$ of small width such that $\h_\bx(S)\ll \lambda$. Note that any such set must satisfy $|S|\geq \delta\cdot n/2$. Since $S$ has a small width, the $\bz$ vector restricted to the induced graph $G[S]$ looks like a constant vector. 
If $\phi_\bx(S)=0$, then indeed $G_\bx[S]$ is a small set expander. 
But, if $\phi_\bx(S)\neq 0$, 
 we cannot conclude that \emph{any} small set $T\subseteq S$ has a large unweighted expansion. 
Nonetheless, since $\h_\bx(S)$ is small, a random small set has large unweighted expansion; in particular, if we partition $S$ into many small sets say $\{T_1,T_2,\dots\}$, we can conclude that 
\begin{equation}\label{eq:hGxbigTi} \EE{i}{\h_{G[S]}(T_i)}\approx \EE{i}{\h_{G_\bx[S]}(T_i)} \gtrsim \lambda.	
\end{equation}
This fact will be crucially used in the third step to find an induced expander.

To find  $S$ we run the following randomized algorithm: First we map each vertex $v$, to the point $\log z_v$ on the real line. Then, we randomly choose vertices in a window of length $w$, where the probability of each window is proportional to the total weight of the vertices that it contains.  By construction, the width of any set in the distribution is at most $e^w$; we use an averaging argument to show that the expected weighted expansion of a random window is proportional to $1/w$ (see \autoref{lem:nonexpandingset} for the details of the proof)

In the last step of the proof we design an algorithm to find an expander $G[T]$ in the set $S$ that we found in the previous step. 
We use the spectral bisection algorithm to recursively partition $G[S]$ until we find an $\eps\cdot \Delta$-expander, or  the size of every set in the partition is less than $\delta n/2$. 
It follows that a random set in the final partition has  unweighted expansion $O(\sqrt{2\eps}\cdot  \Delta \log(1/\delta))$. Since $S$ has width $e^w$, the \emph{weighted} expansion of any subset $T\subset S$ is within $e^w$ of its unweighted expansion. 
But, by  \eqref{eq:hGxbigTi} 
a random set in the final partition must have a weighted expansion at least $\Omega(\lambda)$. 
Letting $\eps\asymp \frac{1}{\log^2\delta \exp(\Delta/\lambda)}$ proves the theorem.
\section{The Analysis of the Simple Greedy Algorithm} 
\label{sec:greedy}
In this section we prove \autoref{thm:greedy}.
First we prove the following simple lemma.

\begin{lemma}
\label{lem:graphdecomposition}
There is a 
polynomial algorithm (Algorithm \ref{alg:findingexpander}) that for every graph $H$ with $n$ vertices of maximum degree $\Delta$ and parameters $0 < \epsilon,\delta < 1$, returns one of the followings. 
\begin{enumerate}[i)]
\item \label{case:largeexpander} A set $S \subseteq V(H)$ of size at least 
$3\delta \cdot n/8$ and $\lambda_2(H[S]) \geq \eps\cdot \Delta$ 
\item \label{case:decomposition} A
partition $\mathcal{P}$ of $V(H)$ into sets of size at most $\delta n /2$ such that  
\begin{equation}
\label{eq:numberofedges}
\sum_{T \in \mathcal{P}} |E(T,V(H)\setminus{T})| \leq 2\Delta(\log \frac{1}{\delta})\sqrt{2\epsilon }\cdot n.
\end{equation}
\end{enumerate}

\end{lemma}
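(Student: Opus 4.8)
The plan is to apply the Spectral Bisection Algorithm (Algorithm~\ref{alg:minbisection}) recursively, building a partition in a divide-and-conquer fashion, and to stop a branch of the recursion either when we have located a large induced expander or when the current piece has become small enough (size at most $\delta n/2$). Concretely, I would run a recursive procedure on $H$: call Algorithm~\ref{alg:minbisection} with parameter $\epsilon$ on the current vertex set $W$. By \autoref{lem:bisection}, one of two things happens. Either the algorithm returns a set $S$ with $|S|\ge \tfrac34|W|$ and $\lambda_2(H[S])\ge\epsilon\Delta$; if at this point $|S|\ge 3\delta n/8$ we are in case~(\ref{case:largeexpander}) and we halt the whole procedure and output $S$. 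Or the algorithm returns a balanced cut $(S,W\setminus S)$ with $\tfrac14|W|\le|S|\le\tfrac34|W|$ and $h(S)\le\sqrt{2\epsilon}\,\Delta$; in that case we recurse on both sides $S$ and $W\setminus S$. We also stop recursing on any piece once its size drops to at most $\delta n/2$, placing it as a block of $\mathcal P$; and if the ``expander'' case returns an $S$ that is large relative to $W$ but has fewer than $3\delta n/8$ vertices (which can only happen when $|W|$ itself is already $O(\delta n)$), we likewise treat the pieces as blocks of the partition. If we never trigger case~(\ref{case:largeexpander}), the leaves of the recursion tree form the partition $\mathcal P$, and every block has size at most $\delta n/2$ by construction.

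The remaining task is to bound $\sum_{T\in\mathcal P}|E(T,V(H)\setminus T)|$. The standard accounting is to charge each cut edge to the recursion level at which it was first cut. Consider the recursion tree. At the root the whole vertex set has $n$ vertices; each balanced split sends at most a $\tfrac34$ fraction to either child, so after $k$ levels every piece has size at most $(\tfrac34)^k n$, and a piece of size $\le\delta n/2$ is reached after at most $k^\star = \log_{4/3}(2/\delta) = O(\log(1/\delta))$ levels. The key quantitative point is that at each fixed level the pieces are disjoint, so the total number of vertices across all pieces at that level is at most $n$; since each split of a piece of size $m$ cuts at most $\sqrt{2\epsilon}\,\Delta\cdot m$ edges (because $h(S)\le\sqrt{2\epsilon}\,\Delta$ means $|E(S,W\setminus S)|\le\sqrt{2\epsilon}\,\Delta|S|\le\sqrt{2\epsilon}\,\Delta\, m$), summing over the disjoint pieces at one level gives at most $\sqrt{2\epsilon}\,\Delta\cdot n$ edges cut at that level. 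Multiplying by the number of levels $k^\star=O(\log(1/\delta))$ yields the bound $\sum_{T\in\mathcal P}|E(T,V(H)\setminus T)| \le 2\Delta(\log\tfrac1\delta)\sqrt{2\epsilon}\,n$ after fixing the constant in $\log_{4/3}$ versus $\log$ (here $\log$ is presumably natural log or base $2$; the factor $2$ absorbs the base-change constant since $1/\log(4/3) < 2/\ln 2$ etc.). One has to be slightly careful that the ``expander-returned-but-too-small'' case and the fact that $h(S)$ bounds only the smaller side do not disrupt this charging, but since we always recurse on both sides of a balanced cut and each edge in a cut is counted once for each endpoint-piece, a factor of $2$ (already present) covers double counting.

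I expect the main obstacle to be purely bookkeeping rather than conceptual: getting the size-recursion and the edge-charging constants to land exactly on $2\Delta(\log\tfrac1\delta)\sqrt{2\epsilon}\,n$, and handling the boundary behavior of Algorithm~\ref{alg:minbisection} when $|W|$ is already close to $\delta n$ (so that the ``large expander'' it returns may fail the $3\delta n/8$ size requirement, forcing us to subdivide further or to declare those pieces blocks of $\mathcal P$). The clean way to organize this is to first prove, by induction on the recursion, that every leaf block has size at most $\delta n/2$ and that the recursion depth is at most $\lceil\log_{4/3}(2/\delta)\rceil$, and then prove the edge bound by the level-by-level argument above; the polynomial running time is immediate since the recursion tree has $O(n)$ leaves and each node does one invocation of the polynomial-time spectral partitioning routine.
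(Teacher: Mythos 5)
Your proposal is correct and follows essentially the same plan as the paper's own proof: recursively apply the spectral bisection routine (\autoref{lem:bisection}) to every piece of size more than $\delta n/2$, halt and output if the expander case of the bisection algorithm ever fires (which, since the current piece then has more than $\delta n/2$ vertices, automatically yields a set of size at least $\tfrac34\cdot\tfrac{\delta n}{2}=\tfrac{3\delta n}{8}$, so the ``expander-returned-but-too-small'' worry you raise is vacuous), and otherwise bound the boundary of the resulting partition by a per-level charging argument with $O(\log(1/\delta))$ levels, exactly as in the paper's Facts following Algorithm~\ref{alg:findingexpander}. The only deviation is minor constant bookkeeping (both you and the paper are somewhat loose in passing from $\log_{4/3}(2/\delta)$ to $2\log(1/\delta)$), which does not affect the asymptotic statement.
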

\begin{proof}
If  $\lambda_2(H) 
\geq \epsilon\cdot  \Delta$ then we are done. Otherwise, we split 
$V(H)$ into two pieces by the spectral bisection algorithm introduced in \autoref{lem:bisection} for $\epsilon$. Then, we recursively run the bisection algorithm  
on each new set until we find either an $\epsilon \Delta$-expander, or all sets have size at most $\frac{\delta n}{2}$. The details are described in  \autoref{alg:findingexpander}. If we find an $\epsilon\Delta$-expander (Line \ref{line:elseline} of  \autoref{alg:findingexpander}),  its size is at least $ \frac{3}{4}\cdot 
 \frac{\delta n}{2}=\frac{3\delta n}{8}$, and we are done. 

\begin{algorithm}
\begin{algorithmic}[1]
	\Input A graph $H$ with maximum degree $\Delta$ and parameters $0 < \delta,\epsilon < 1$.
	\Output A subset of $V(H)$ or a partitioning of it. 
	\State Let $\mathcal{P}=\{V(H)\}$. 
	\While{ there is a set in $\mathcal{P}$ with more than  $\frac{\delta n}{2}$ vertices}
	    \ForAll { $S \in \mathcal{P}$ with $|S| > \frac{\delta n}{2}$}
	        \State Run \autoref{alg:minbisection} on input $\epsilon$ and $H[S]$. Let $T \subseteq S$ be the output.
	        \State If $\lambda_2(H[T]) \leq \epsilon\cdot \Delta$, return $T$.
	        Otherwise, add $T$ and $S\setminus{T}$ to $\mathcal{P}$ and remove $S$. \label{line:elseline}
		    \EndFor
	\EndWhile	
		\State Return $\mathcal{P}$.\label{line:nonexpanding}
\end{algorithmic}
\caption{Algorithm for finding either a large expander or a sparse partition}
\label{alg:findingexpander}
\end{algorithm}
Otherwise, Let $\mathcal{P}$ be the partition of $V(H)$ at the end of the algorithm. 
In this case, by description of the  algorithm all sets in $\mathcal{P}$ have size at most $\frac{\delta n}{ 2}$, so all we need to do is to prove \eqref{eq:numberofedges}.
Let $\mathcal{P}_i$ be the set $\mathcal{P}$ at the end of iteration $i$ of the main 
loop of the algorithm and define $e_i := \sum_{T \in \mathcal{P}_i} |E(T,V(H)\setminus{T})|$.    
By  description of the algorithm, we have the following two simple facts.
\begin{fact}
For any $i>1$, $e_{i} \leq e_{i-1}+\Delta \sqrt{2\epsilon}\cdot V(H)$. 
\end{fact}
The above holds since $\mathcal{P}_i$ is obtained by splitting all sets in $\mathcal{P}_{i-1}$ into two new sets by a cut of expansion at most $\sqrt{2\epsilon}\cdot  \Delta$.
\begin{fact}
The number of iterations of the main loop is at most $2\log\frac{1}{\delta}$.
\end{fact}
To see this, note that the algorithm terminates after $i$ steps where $i$ is the smallest number for which all the sets in $\mathcal{P}_i$ have size at most $\frac{\delta n}{2}$. Furthermore, in every iteration we split every set into two pieces, each of them having at most $\frac{3}{4}$ fraction of the vertices of the initial set. 
Combining these two facts, we get \eqref{eq:numberofedges} which completes the proof. 
\end{proof}

\begin{proofof}{\autoref{thm:greedy}}
We show that if for some $\eps$ the output of \autoref{alg:findingexpander} for $G,\delta,\eps$ is Case \ref{case:decomposition}, then 
\begin{equation}
\label{eq:contradictionwithnumofedges}
\eps \geq \frac{\lambda^2\delta^2}{32\Delta^2\log^2\delta}.
\end{equation}
So, to find an induced expander, it is enough to run \autoref{alg:findingexpander} for an $\eps$ smaller than the RHS.
Suppose that for some $\eps>0$ the algorithm returns a partition $\mathcal{P}$ of 
$V(G)$ satisfying Case \ref{case:decomposition}. 
By definition 
of $\lambda(\delta)$, there is a set $S\subseteq V$ of size $|S| \geq \delta n $ 
such that $\lambda_2(G[S]) \geq \lambda(\delta)$. So we have  
\begin{eqnarray*}
\label{eq:numberofedgescontradict}
\sum_{T \in \mathcal{P}} |E(T,V(G)\setminus{T})| &\geq& \sum_{T \in \mathcal{P}} |E( T \cap S,V(G)\setminus{T})| \\
&\geq& \sum_{T \in \mathcal{P}} \frac{\lambda}{2}|S \cap T|  \geq \frac{\lambda}{2}\cdot \delta n 
\end{eqnarray*}
where in the second inequality we use Cheeger's inequality and the fact that for every $T \in \mathcal{P}$, $|T\cap S| \leq |S|/2$ as $|T| \leq \delta \cdot n /2$ by Case \ref{case:decomposition} of the 
lemma.
Using \eqref{eq:numberofedges}, 
we get that
$$ \frac{\lambda}2\cdot \delta \leq 2\Delta\log(1/\delta)\sqrt{2\eps}\cdot n,$$
which proves \eqref{eq:contradictionwithnumofedges}.
\end{proofof}
\newline

In the following proposition we show that our analysis in the preceding theorem is essentially 
tight and the largest induced expansion that Algorithm \ref{alg:findingexpander} guarantees is $O(\delta \lambda(\delta))$. 
\begin{proposition}\label{prop:tightgreedy}
For any $0 < \delta < 1$, there exists a graph $G$ which is $O(1)$-regular such that the output of the algorithm of \autoref{thm:greedy} on input $G$ and $\delta $ is an  $O(\delta \lambda(\delta))$-expander. 
\end{proposition}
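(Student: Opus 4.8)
## Proof Proposal for Proposition \ref{prop:tightgreedy}

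The plan is to construct an explicit family of $O(1)$-regular graphs on which Algorithm \ref{alg:findingexpander} performs poorly, matching the $O(\delta\lambda(\delta))$ upper bound hidden in the analysis of \autoref{thm:greedy}. The intuition is that the greedy/spectral-bisection approach can be fooled by an instance where a large induced expander is present, but the \emph{sparsest} cuts of $G$ (which the spectral partitioning algorithm chases) slice \emph{through} that expander rather than around it. So I would take a good expander $H$ — say a constant-degree Ramanujan graph — on roughly $\delta n$ vertices, so that $\lambda(\delta) = \Theta(\Delta) = \Theta(1)$, and then attach to it a ``padding'' gadget: a long path, or a collection of low-expansion pieces (e.g.\ a cycle or a union of small bounded-degree pieces connected in a path-like fashion) on the remaining $\approx (1-\delta)n$ vertices, glued to $H$ by only a constant number of edges (or to each vertex of $H$ by $O(1)$ edges, keeping everything $O(1)$-regular by adjusting degrees with dummy structure). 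The key point of the construction is that inside $H$ there is a collection of cuts of combinatorial expansion $\Theta(\delta)$: because the padding is attached sparsely, the cut separating a large chunk of the padding together with part of $H$ has expansion dominated by the boundary inside $H$, which is only an $O(\delta)$ fraction of $|H|$ when normalized by the (much larger) set size. I would arrange it so that the \emph{globally sparsest} cut, and every cut the spectral bisection algorithm finds along the way, has expansion $O(\delta)$ and bisects $H$.

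The next step is to trace what Algorithm \ref{alg:findingexpander} does on this instance with the parameter $\eps$ it would need to use. By \eqref{eq:contradictionwithnumofedges} in the proof of \autoref{thm:greedy}, the algorithm only certifies an expander once $\eps$ falls below $\Theta(\lambda^2\delta^2/(\Delta^2\log^2\delta)) = \Theta(\delta^2/\log^2\delta)$ (using $\lambda,\Delta = \Theta(1)$). At any $\eps$ in the relevant range, the spectral partitioning step called inside \autoref{alg:minbisection} will, because there exists a cut of expansion $O(\sqrt{\delta})$ or smaller crossing $H$, produce a cut of expansion $O(\sqrt{\eps}\,\Delta)$; and since every balanced-ish cut of $G$ of small expansion must pass through the expander $H$, the recursive partitioning repeatedly chops $H$. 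After $O(\log(1/\delta))$ rounds the pieces have size $\le \delta n/2$ and $H$ has been shredded, so the algorithm outputs a partition (Case \ref{case:decomposition}), never returning an expander from the first branch. One then has to argue that, with $\eps$ set just below the threshold the algorithm needs, \emph{any} induced subgraph $G[S]$ with $|S|\ge 3\delta n/8$ that the algorithm could conceivably return — meaning one on which $\lambda_2 \ge \eps\Delta$ — has $\lambda_2(G[S]) = O(\eps\Delta) = O(\delta^2/\log^2\delta)$, hence $O(\delta\lambda(\delta))$ (in fact better, but $O(\delta\lambda(\delta))$ suffices for the statement); the cleaner route is simply to show the algorithm falls into Case \ref{case:decomposition} for all $\eps$ down to $\Theta(\delta^2/\log^2\delta)$ and thus only returns an expander with $\lambda_2 \lesssim \delta^2/\log^2\delta \le \delta\lambda(\delta)$.

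Concretely, I expect the final write-up to take this shape: (1) define $G$ = Ramanujan expander $H$ on $m = \lceil\delta n\rceil$ vertices, plus a path-like padding $P$ on $n - m$ vertices, with $H$ and $P$ joined by $O(1)$ edges, all vertices made exactly $\Delta$-regular for a constant $\Delta$; (2) verify $\lambda(\delta) \ge \lambda_2(H) = \Omega(\Delta)$, so $\lambda(\delta) = \Theta(1)$; (3) exhibit, for each scale between $\delta n$ and $n/2$, a cut of $G$ of combinatorial expansion $O(\delta)$ that carves off part of $P$ plus roughly half of $H$, and note these are essentially the only sparse cuts; (4) conclude the spectral partitioning algorithm inside Algorithm \ref{alg:findingexpander}, for any $\eps$ above the threshold $\Theta(\delta^2/\log^2\delta)$, finds such a cut, recursively destroys $H$ in $O(\log 1/\delta)$ rounds, and returns Case \ref{case:decomposition}; (5) deduce the algorithm of \autoref{thm:greedy} only ever outputs an $\eps\Delta$-expander with $\eps = O(\delta^2/\log^2\delta)$, i.e.\ an $O(\delta\lambda(\delta))$-expander. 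The main obstacle I anticipate is step (3)/(4): pinning down that the spectral partitioning algorithm's threshold cut really does cross $H$ (and not, say, peel off a harmless piece of the padding that leaves $H$ intact) requires either choosing the padding carefully so that \emph{its} only sparse cuts also chew into $H$, or arguing via the second eigenvector directly — e.g.\ showing the Fiedler vector of $G$ is essentially constant on $P$ and varies across $H$, so the threshold sweep cuts inside $H$. Getting this eigenvector-level control while keeping the graph $O(1)$-regular and the arithmetic clean is the delicate part; a union-of-small-expanders padding (a ``barbell''-like chain) is probably easier to control than a single path.
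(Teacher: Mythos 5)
Your overall strategy is right — build a graph where an induced expander exists but every cut the spectral bisection chases slices through it — but the specific construction you propose does not work, and the obstacle you flag at the end is exactly where it breaks. If you take a Ramanujan graph $H$ on $\approx \delta n$ vertices and glue a path-like padding $P$ on the remaining vertices to $H$ by only $O(1)$ edges, then the globally sparsest (balanced) cuts of $G$ live entirely in $P$: the cut separating $P$ (or a chunk of it) from $H$ crosses only $O(1)$ edges, while any cut through $H$ crosses $\Omega(\delta n)$ edges. The spectral partitioning sweep will therefore peel off pieces of $P$, never touching $H$; after $O(\log(1/\delta))$ rounds the piece containing $H$ shrinks to essentially $H$ itself, which is a $\Theta(1)$-expander, and the algorithm returns it — defeating the lower bound. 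Your suggested rescue via the Fiedler vector also runs the wrong way: for a long low-conductance padding hanging off a small expander, the second eigenvector of $L_G$ varies slowly along $P$ and is nearly \emph{constant} on $H$, so the threshold sweep cuts $P$, not $H$.

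The paper's construction resolves precisely this issue, and it is the missing idea: instead of one blob of padding attached at a few points, take $H$ to be a (weighted) complete graph on $n$ vertices with edge weight $1/(n-1)$, and attach a \emph{separate} path of length $1/\delta$ to \emph{every} vertex of $H$. Now there is no way to detach the padding from $H$ by a sparse balanced cut, because each path is welded to its own vertex of the clique. Consequently every balanced cut of $G$ must cut $H$, and when it does, it automatically splits the attached path mass proportionally; a cut of $H$ into two halves has weight $\Theta(n)$ but the halves each have $\Theta(n/\delta)$ total vertices, giving expansion $\Theta(\delta)$. The same holds recursively for every intermediate piece the algorithm produces — each one is a fraction of the clique together with all its dangling paths, and is an $O(\delta)$-expander. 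Since $\lambda(\delta)\geq 1$ (the clique is a $1$-expander on a $\Theta(\delta)$ fraction of the vertices), the output is $O(\delta)=O(\delta\lambda(\delta))$. Your plan was pointing in the right direction when you wrote ``choosing the padding carefully so that its only sparse cuts also chew into $H$,'' but without the per-vertex-path structure the cuts you need do not exist, and the details you sketch would give back a $\Theta(1)$-expander rather than the tight example.
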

\begin{proof}
Let $H$ be a complete graph with $n$ vertices where every edge has weight $\frac1{n-1}$. We construct $G$ by attaching a path $P_v$ of length $\frac1\delta$ to each $v \in V(H)$, where the weight of each edge of each path is $1$.  Note that these paths are mutually disjoint. Since the induced subgraph $H$ of $G$ is an $1$-expander, we have $\lambda(\delta)\geq 1$.

To prove the proposition, it is sufficient to show that for any $\eps>0$,  if we run  \autoref{alg:findingexpander} on $G$, $\delta$ and $\eps$, then
all of the subsets of $V$ that we construct in the algorithm are $O(\delta)$-expanders. 
Let $S \subset V(G)$ be the set containing half of $V(H)$ together with the paths attached to its vertices. It is easy to see that $(S,\overline{S})$ is the minimum bisection (and the sparsest cut) of $G$. 
So even with an access to an oracle for the minimum bisection (or the sparsest cut) problem,
$V(G)$ will be divided into $S$ and $\overline{S}$ in the first step of Algorithm \ref{alg:findingexpander}. By a similar argument, it follows that in the second iteration, $V(G)$ will be divided into $4$ parts, where each of them contains a quarter of the vertices of $H$ together with the paths attached to them. Continuing this line of reasoning, at the end of the algorithm, $V(G)$ is divided into $2/\delta$ sets each with exactly $\delta/2$ fraction of the vertices of $H$ together with their attached paths. 
Depending on the value of $\eps$, the algorithm terminates at some iteration. But, since  all of the aforementioned sets are $O(\delta)$-expanders, the best set that the algorithm  finds is an $O(\delta)$-expander.
\end{proof}

\section{The SDP Rounding Algorithms}
\label{sec:sdprounding}
In this section, we prove our main results, theorems\ref{thm:maintheoremwithoutoutsideexpansion} and \ref{thm:mainwithoutsideexpansion}. 
Our proof follows the  plan that we discussed in \autoref{sec:proofoverview}.
Throughout this section, we assume $G$ is a $\Delta$-regular graph and $(\bx,\by,\lambda)$ represents a feasible solution of \ref{sdp:noh} or \ref{sdp:withh}. 
In the first step, we show that $G_x$ is a $\lambda/2$-small set weighted expander.
\begin{lemma}
\label{lem:expandingsmallsets}
For any $S \subset V$ of size at most $\frac{\delta n}{2}$, we have $\h_\bx(S) \geq \frac{\lambda}{2}$.
\end{lemma}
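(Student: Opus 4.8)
The plan is to exploit the three relevant SDP constraints in sequence, reducing the claim about $G_\bx$ to a claim about $G_\by$, which is then easy. Fix a set $S \subseteq V$ with $|S| \leq \delta n / 2$.

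\textbf{Step 1: From the spectral constraint to a comparison of weighted expansions.} I would test the main constraint \eqref{eq:lambdaconstraint} against the indicator vector $\bone_S$. Since $\bof^\intercal \left( \sum_e x_e L_e \right) \bof = \sum_{\{u,v\}} x_{\{u,v\}} (f_u - f_v)^2$ for any $\bof$, plugging in $\bof = \bone_S$ gives $\bone_S^\intercal \left( \sum_e x_e L_e \right) \bone_S = \bx(S, \overline{S})$, and similarly $\bone_S^\intercal \left( \sum_{u,v} y_{\{u,v\}} L_{u,v} \right) \bone_S = \by(S, \overline{S})$. The PSD inequality \eqref{eq:lambdaconstraint} therefore yields
\begin{equation*}
\bx(S, \overline{S}) \;\geq\; \lambda \cdot \by(S, \overline{S}).
\end{equation*}
Dividing both sides by $\bz(S) = \sum_{v \in S} z_v$ gives $h_\bx(S) \geq \lambda \cdot h_\by(S)$. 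So it suffices to show $h_\by(S) \geq 1/2$, i.e. that $G_\by$ is a $1/2$-small-set weighted expander.

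\textbf{Step 2: $G_\by$ is a $1/2$-small-set weighted expander.} Write $\by(S, \overline{S}) = \by(S, V) - 2\,\by(S,S)$ where $\by(S,V) = \sum_{u \in S} \sum_{w} y_{\{u,w\}}$. By constraint \eqref{eq:sumyconstraint}, $\sum_w y_{\{u,w\}} \geq z_u$ for every $u$, so $\by(S,V) \geq \bz(S)$. For the internal edges, constraint \eqref{eq:maxyconstraint} says $\delta n \cdot y_{\{u,v\}} \leq \sum_w y_{\{u,w\}}$ for all $u,v$; summing over $v \in S$ (of which there are $|S| \leq \delta n / 2$) gives $\sum_{v \in S} y_{\{u,v\}} \leq \frac{|S|}{\delta n} \sum_w y_{\{u,w\}} \leq \tfrac12 \sum_w y_{\{u,w\}}$, and then summing over $u \in S$ gives $2\,\by(S,S) = \sum_{u \in S}\sum_{v \in S} y_{\{u,v\}} \leq \tfrac12 \by(S,V)$. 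Combining, $\by(S,\overline{S}) \geq \by(S,V) - \tfrac12\by(S,V) = \tfrac12 \by(S,V) \geq \tfrac12 \bz(S)$, hence $h_\by(S) \geq 1/2$.

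\textbf{Conclusion.} Chaining the two steps, $h_\bx(S) \geq \lambda \cdot h_\by(S) \geq \lambda/2$, as claimed. I do not anticipate a serious obstacle here; the only point requiring a little care is keeping the bookkeeping on $\by(S,S)$ versus $\by(S,V)$ consistent (each unordered internal pair is counted with the right multiplicity, which is why the factor of $2$ appears), and making sure the double-counting in $2\,\by(S,S)$ matches the convention used when expanding $\bone_S^\intercal (\sum y_{\{u,v\}} L_{u,v}) \bone_S$. Everything else is a direct substitution into the SDP constraints.
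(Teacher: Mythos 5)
Your proof is correct and follows essentially the same route as the paper: test constraint \eqref{eq:lambdaconstraint} against $\bone_S$ to get $\bx(S,\overline{S}) \geq \lambda\,\by(S,\overline{S})$, then use constraints \eqref{eq:maxyconstraint} and \eqref{eq:sumyconstraint} together with $|S|\le \delta n/2$ to bound $\by(S,\overline{S}) \geq \tfrac12\bz(S)$. The only difference is superficial bookkeeping (your $\by(S,V)-2\by(S,S)$ decomposition versus the paper writing the ordered double sum directly), which you handle consistently.
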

\begin{proof}
First we prove  $h_\by(S) \geq \frac{1}{2}$, and then by  constraint \eqref{eq:lambdaconstraint}, we conclude that $\h_\bx(S) \geq \frac{\lambda}{2}$.
We have 
\begin{eqnarray*}
h_\by(S)  =   \frac{\by(S,\overline{S})}{\bz(S)} &=&  \frac{1}{\bz(S)}\left(\sum_{ 
u \in S} \sum_{ v \in V} y_{\{u,v\}} - \sum_{u \in S} \sum_{v \in S} y_{\{u,v\}}\right) \\
&\geq &  \frac{1}{\bz(S)} \left(\sum_{u\in S}\sum_{ v  \in V} y_{\{u,v\}} - 
\frac{1}{\delta n} \sum_{u \in S}\sum_{v  \in S} \sum_{w \in V}  y_{\{u,w\}} \right) \\
&=& \frac{\delta n-|S|}{\delta n\cdot \bz(S)} \left(\sum_{u 
\in S}\sum_{w \in V} y_{ \{ u,w \} } \right) 
\end{eqnarray*}
where the first inequality uses  Constraint \eqref{eq:maxyconstraint}. 
Note that Constraint \eqref{eq:sumyconstraint} implies that $\sum_{u \in S} \sum_{w \in V} y_{\{u,w\}} \geq z(S)$. Combining it with the above inequalities and our assumption that $|S| \leq \delta\cdot \frac{n}{2}$, we get $h_\by(S) \geq \frac{1}{2}$. Therefore, to 
prove the lemma, it is enough to show that $\bx(S,\overline{S}) \geq 
\lambda\cdot \by(S,\overline{S})$. This directly follows from Constraint \eqref{eq:lambdaconstraint}. We have 
\begin{eqnarray} \bx(S,\overline{S}) =\bone_S^\intercal \left(\sum_{e} x_e L_{uv}\right)\bone_S   
\geq  \lambda \cdot \bone_S^\intercal \left(\sum_{u,v} y_{\{uv\}}L_{uv}\right)\bone_S = \lambda \cdot \by(S,\overline{S})
. 
\end{eqnarray}
So $\h_\bx(S) \geq \frac{\lambda}{2}$.
\end{proof}
In the next lemma, we provide an algorithm to find a set of vertices with small weighted expansion in $G_\bx$ and relatively small width. 
\begin{lemma}
\label{lem:nonexpandingset}
Let $\alpha := \frac{\sum_{e \in E} x_e}{\bz(V)}$. For any $w>0$, there is a set $S \subseteq V$ such that $$\h_\bx(S) \leq \dfrac{2\alpha(\log \, \frac{\Delta}{\alpha})}{w},$$ and $\frac{\max_{ u \in S} z_u}{\min_{u \in S} z_u} \leq e^{w}$. Furthermore, such a set can be found in polynomial time. 
\end{lemma}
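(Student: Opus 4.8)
The plan is to implement the "windowing on the log-scale" idea sketched in Section~\ref{sec:proofoverview}. First I would place every vertex $v$ at the point $\log z_v$ on the real line, and consider a window of width $w$ whose left endpoint is chosen at a uniformly random offset; more precisely, partition $\R$ into the intervals $I_j^{(t)} = [jw + t, (j+1)w + t)$ for a shift $t$ chosen uniformly from $[0,w)$, and let $S_j^{(t)} = \{v : \log z_v \in I_j^{(t)}\}$. By construction each $S_j^{(t)}$ has width at most $e^w$, so the width bound is automatic; the content of the lemma is the expansion bound, and the natural way to get it is a first-moment / averaging argument over the random shift $t$ and a random choice of block $j$ (weighted by $\bz(S_j^{(t)})$), followed by the observation that some concrete realization does at least as well as the weighted average. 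So the target inequality becomes a bound on an expectation.

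The key computation is to bound $\mE_t \big[\sum_j \bx(S_j^{(t)}, \overline{S_j^{(t)}})\big]$, the expected total weight of cross-window edges. An edge $e=\{u,v\}$ contributes to some cut $(S_j^{(t)},\overline{S_j^{(t)}})$ exactly when $\log z_u$ and $\log z_v$ land in different blocks of the shifted partition; for a fixed $e$ this happens with probability $\min(1, |\log z_u - \log z_v|/w)$ over the random shift. Summing, $\mE_t[\text{cut weight}] \le \frac1w \sum_{e=\{u,v\}} x_e \,|\log z_u - \log z_v|$. To turn $|\log z_u - \log z_v|$ into something controllable I would use $\log z_u, \log z_v \le \log \Delta$ (since each $z_v = \max_{e\sim v} x_e$ and there are at most $\Delta$ edges at $v$, wait—more carefully, one needs an upper bound; actually the cleanest route is to normalize so that $\max_v z_v$ corresponds to the relevant scale and bound $|\log z_u-\log z_v|$ by the total "spread" $\log(\max z / \min z)$, then relate $\log(\max z/\min z)$ to $\log(\Delta/\alpha)$ using $\alpha = \sum_e x_e/\bz(V)$). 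The honest way: each $x_e \le z_u$ and $x_e\le z_v$, so $x_e \le \sqrt{z_u z_v}$, and one shows the weighted average log-gap telescopes, giving $\sum_e x_e|\log z_u - \log z_v| \le \bz(V)\cdot \log(\Delta/\alpha)$ up to the factor $2$ (this is where the $\log(\Delta/\alpha)$ term and the factor $2$ in the statement come from; I expect it to follow from sorting the vertices by $z_v$ and a telescoping/rearrangement estimate, together with the trivial bounds $\max_v z_v \le \Delta \cdot$ (something) and $\alpha \le \max_v z_v$, $\alpha \ge$ (min-related quantity)).

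Granting that bound, $\mE_t\big[\sum_j \bx(S_j^{(t)},\overline{S_j^{(t)}})\big] \le \frac{2\alpha \log(\Delta/\alpha)}{w}\cdot \bz(V)$ (after absorbing constants), while $\sum_j \bz(S_j^{(t)}) = \bz(V)$ deterministically for every $t$. Hence there is a shift $t$ with $\sum_j \bx(S_j^{(t)},\overline{S_j^{(t)}}) \le \frac{2\alpha\log(\Delta/\alpha)}{w}\,\bz(V) = \frac{2\alpha\log(\Delta/\alpha)}{w}\sum_j \bz(S_j^{(t)})$, so by averaging over $j$ (with weights $\bz(S_j^{(t)})$) some block $S = S_j^{(t)}$ satisfies $\h_\bx(S) = \bx(S,\overline S)/\bz(S) \le \frac{2\alpha\log(\Delta/\alpha)}{w}$, which is exactly the claimed inequality. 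Polynomial-time computability is clear: there are only polynomially many combinatorially distinct shifts $t$ (the partition changes only when some $\log z_v$ crosses a block boundary, i.e.\ at $\le nw/w = n$ breakpoints—finitely many intervals of $t$ to try), and for each we just evaluate the $O(n)$ blocks and pick the best.

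The main obstacle I anticipate is the step that converts $\sum_{e}x_e|\log z_u - \log z_v|$ into $2\alpha\log(\Delta/\alpha)\cdot\bz(V)$ with the right constant and the right logarithmic argument $\Delta/\alpha$ rather than, say, $\log(\max z_v/\min z_v)$. This needs: (i) a clean upper bound on $\max_v z_v$ in terms of $\Delta$ and $\alpha$, which should come from $\alpha\bz(V) = \sum_e x_e \ge$ (contribution of the heaviest vertex) and the fact that a vertex has only $\Delta$ incident edges, so $z_v \cdot$ (number of heavy edges) is controlled; and (ii) discarding an $\alpha$-fraction of the weight at the low end (vertices with tiny $z_v$) so that the effective ratio $\max/\min$ on the retained mass is $\le \Delta/\alpha$ — a standard "truncate the tail" trick. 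Everything else is bookkeeping: linearity of expectation over edges, the per-edge crossing probability $\min(1,|\Delta\log z|/w)$, and the deterministic identity $\sum_j\bz(S_j^{(t)})=\bz(V)$.
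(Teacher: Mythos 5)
Your windowing setup is exactly the paper's: place each $v$ at $\log z_v$, pick a window of width $w$ whose position is (in effect) chosen with density proportional to $\bz(V_{t,t+w})$, compute the expected crossing weight via the per-edge crossing probability $\min(1,|\log z_u - \log z_v|/w)$, and conclude by averaging. The width bound and the polynomial-time claim are handled correctly, and your reduction of the lemma to the single inequality
\[
\sum_{\{u,v\}\in E} x_{\{u,v\}}\,\bigl|\log z_u - \log z_v\bigr| \;\le\; \alpha\Bigl(\log\tfrac{\Delta}{\alpha}\Bigr)\bz(V)
\]
is also right.

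However, that inequality is the entire content of the lemma, and your proposal does not prove it; it only gestures at two possible routes, and neither of them works. Route (i), bounding $\max_v z_v$ in terms of $\Delta$ and $\alpha$, is impossible: $\alpha$ is a scale-free ratio, so $\max_v z_v$ (and even the ratio $\max z_v/\min z_v$) can be unbounded with $\Delta$ and $\alpha$ fixed. For instance, on a path with $x_{\{v_i,v_{i+1}\}}=2^{-i}$ you have $\Delta=2$, $\alpha=\Theta(1)$, but $\max z_v/\min z_v=2^{\,n-2}$, so the ``spread'' $\log(\max z/\min z)$ has nothing to do with $\log(\Delta/\alpha)$. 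Route (ii), truncating a tail, does not repair this because the large log-gaps you must control are per-edge and can involve vertices whose $z$-values are each large in the $\bz$-measure. The telescoping/rearrangement heuristic you mention is a placeholder, not an argument. The paper closes this gap with a genuinely different idea: order vertices so that $v<u$ implies $z_v\le z_u$, use $x_{\{u,v\}}\le z_v$ (which holds by the definition $z_v=\max_{e\sim v}x_e$) to replace $\log(z_u/z_v)$ by $\log(z_u/x_{\{u,v\}})$, and then for each $u$ observe that, with $\alpha_u := \sum_{v<u}x_{\{u,v\}}/z_u$, the quantity $\sum_{v<u}x_{\{u,v\}}\log(z_u/x_{\{u,v\}})$ equals $\alpha_u z_u\log(1/\alpha_u)$ plus $\alpha_u z_u$ times the entropy of the distribution $\{x_{\{u,v\}}/(\alpha_u z_u)\}_{v<u}$, which is supported on at most $\Delta$ atoms and hence has entropy at most $\log\Delta$; this gives the per-vertex bound $\alpha_u z_u\log(\Delta/\alpha_u)$. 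Finally, the concavity of $s\mapsto s\log(\Delta/s)$ and Jensen's inequality (with weights $z_u/\bz(V)$) sum these per-vertex bounds to $\alpha\log(\Delta/\alpha)\bz(V)$. This entropy-plus-Jensen step is the crux of the lemma, and it is missing from your proposal.
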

\begin{proof}
Let $V_t =\Set{v \in V}{z_v = e^{t}}$ be the set of vertices with $x$-value $e^{t}$, and let  $V_{t_0,t_1}=
\Set{v \in V}{e^{t_0} \leq z_v \leq e^{t_1}}$. 
In addition, we define $V_{>t}$ and $V_{<t}$ to be $\Set{v}{ z_v > e^{t}}$ and $\Set{v}{z_v < e^{t}}$ respectively. 


It is sufficient to prove there is a $t \in \R $ such that 
\begin{equation}\label{eq:windowgoal}
h_\bx(V_{t,t+w})\leq \dfrac{2\delta(\log \, \frac{\Delta}{\delta})}{w}	
\end{equation}
 This proves the lemma since by definition of $V_{t,t+w}$ 
$$\frac{\max_{v \in V_{t,t+w} }z_v}{\min_{v \in V_{t,t+w}} z_v} \leq e^w.$$
In addition, since there are at most $n$ possible such sets, a simple linear time algorithm find the best $t$. Consider a  probability distribution with density function $p(t)\propto \bz(V_{t,t+w})$, for any $t \in \R$. To prove \eqref{eq:windowgoal}, it is enough to show  
\begin{equation}
\label{eq:expectationgoal}
\mathbb{E}_t [h_\bx(V_{t,t+w})]\leq \frac{2\delta(\log \, (\Delta/\delta))}{w}.
\end{equation}  
Intuitively, if $z_u$ and $z_v$ are close, then the  probability that 
$\{u,v\}$ is cut by a set $V_{t,t+w}$, which is essentially proportional 
to $ |\log z_u - \log z_v|$, is small. On the other hand, since 
$x_{\{u,v\}} \leq \min(z_u,z_v)$,  as $z_u$ and $z_v$ gets further, the 
relative contribution of $x_{\{u,v\}}$, $\frac{x_{\{u,v\}}}{\max (z_u,z_v)}$, 
decreases. 
We start by upper bounding $\EE{t}{\h_\bx(V_{t,t+w})}$.
\begin{eqnarray*}
\mathbb{E}_{t}[h_\bx(V_{t,t+w})] 
=&  \displaystyle\int_{-\infty}^{\infty} p(t)\dfrac{\bx(V_{t,t+w},V \setminus {V_{t,t+w}})}{\bz(V_{t,t+w})}\,dt \\ 
=& \dfrac{1}{Z} \displaystyle\int_{-\infty}^{\infty} \bx(V_{t,t+w},V \setminus {V_{t,t+w}})\,dt\\
\leq & \dfrac{2}{Z} \sum_{u,v}|\log z_u - \log z_v|x_{\{u,v\}}
\end{eqnarray*}
where $Z=\int_{-\infty}^{\infty} \bz(V_{t,t+w})dt$ is the normalizing 
constant of the probability distribution. The last inequality holds, 
since an edge $\{u,v\}$ appears in $E(V_{t,t+w},V\setminus{V_{t,t+w}})$ 
only when exactly one of  the numbers $\log z_u$ and $\log z_v$ lies 
in the interval $[t,t+w]$.
It is fairly easy to verify  $Z = w \cdot \bz(V)$. Substituting $Z$ 
into above, to show \eqref{eq:expectationgoal}, it is enough to prove 
that 
\begin{equation}\label{eq:mainineq} \sum_{u,v}|\log \frac{ z_u } 
{z_v}|x_{\{u,v\}} \leq \delta(\log \, \frac{\Delta}{\delta}) \bz(V).
\end{equation}

To prove \eqref{eq:mainineq}, it is enough to show an analogous 
statement for  every vertex $u \in V$. Assume there is an ordering on the vertices of the graph such that $u < v$ implies $z_u \leq z_v$ and set $\alpha_u :=\sum_{v: 
v < u} x_{\{u,v\}}/z_u$. For any vertex $u \in V$, we show  
\begin{equation}
\label{eq:singlevertexcont}
\sum_{v: v  < u}\left|   \log \frac {z_u}{ z_v}\right|
x_{\{u,v\}}  \leq \alpha_u(\log \, \frac{\Delta}{\alpha_u}) z_u. 
\end{equation}
First, we show that by summing up \eqref{eq:singlevertexcont} over all vertices, we obtain \eqref{eq:mainineq}. Then we prove \eqref{eq:singlevertexcont}. Observe that summing up LHS of \eqref{eq:singlevertexcont} over all $u \in V$, gives the LHS of \eqref{eq:mainineq}. Therefore, it is sufficient to show 
\begin{equation}
\label{eq:singlevertexcontribution2}
\sum_{u \in V} \alpha_u(\log \, \frac{\Delta}{\alpha_u}) z_u \leq 
\alpha(\log \, \frac{\Delta}{\alpha}) \bz(V)
\end{equation}
We prove this by Jensen's inequality.
Since $f(s)= s (\log \frac{\Delta}{s})$ is a concave function, by Jensen's inequality we have  \\
\begin{eqnarray*}
\sum_{u} \frac{z_u}{\bz(V)}\cdot\left(\alpha_u \log \frac{\Delta}{\alpha_u} \right) &\leq & \frac{\sum_u z_u \alpha_u}{\bz(V)}\log \frac{\Delta}{\frac{\sum_u z_u \alpha_u}{\bz(V)}}\\  &=& 
\frac{\sum_{ u \in V} \sum_{v: v< u} x_{\{u,v\}}}{\bz(V)} \log \frac{\Delta \cdot \bz(V)}{\sum_{ u \in V} \sum_{v:  v< u} x_{\{u,v\}}} =   \alpha \log \frac{\Delta}{\alpha},
\end{eqnarray*}
where the first and second equality use definitions of $\alpha_u$ and $\alpha$, respectively. This proves \eqref{eq:singlevertexcontribution2} which implies that by summing up \eqref{eq:singlevertexcont} over all vertices we get \eqref{eq:mainineq}. It remains to prove \eqref{eq:singlevertexcont}. By definition of $z_v$, to prove  \eqref{eq:singlevertexcont}, we can show \\
\begin{equation*}
\sum_{v: v  < u}\left|   \log \frac {z_u}{ x_{\{u,v\}}}\right|x_{\{u,v\}} \leq \alpha_u(\log \, \frac{\Delta}{\alpha_u}) z_u
\end{equation*}
By definition of $\alpha_u$,$\{\frac{x_{\{u,v\}}}{\alpha_u z_u}\}_{v<u}$ is a probability distribution on neighbors $v$ of $u$ where $v<u$, so we can rewrite the LHS in terms of the entropy of this distribution, as follows: 
\begin{eqnarray*}
\sum\limits_{v<u} \left(\log \frac{z_u}{x_{\{u,v\}}}\right)x_{\{u,v\}} &= & 
(\log \frac{1}{\alpha_u})\sum\limits_{v<u} x_{\{u,v\}}+ \alpha_u z_u \left(\sum_{v < u}\left(\log \frac{\alpha_u z_u}{ x_{\{u,v\}}}\right) \frac{x_{\{u,v\}}}{\alpha_u z_u}  \right) \\ &\leq &   \alpha_u z_u(\log \frac{1}{\alpha_u}) + \alpha_u z_u \log \Delta = \alpha_u z_u\log \frac{\Delta}{\alpha_u}
\end{eqnarray*}
where the inequality holds since $u$ has at most $\Delta$ neighbors and consequently  the entropy of the distribution defined above is at most $\log \Delta$.  
As stated before it  proves \eqref{eq:singlevertexcont} and 
finishes the proof of the lemma.
\end{proof}


\begin{lemma}
\label{lem:findingexpander}
Given $S \subseteq V(G)$, for any $0 < \epsilon < 1$, there is a set $T \subseteq S$ satisfying one of the following cases. 
\begin{enumerate}[i)]
\item \label{case:expander} $|T| \geq \frac{3\delta n}{8}$ and $\lambda_2(G[T]) \geq \epsilon \cdot \Delta$.
\item \label{case:nonexpanding} $|T| \leq \frac{\delta n}{2}$ and  \begin{equation}
\label{eq:nonexpandingset}
\h_\bx(T) \leq 2w\cdot\log \frac{1}{\delta}\sqrt{2\epsilon}\cdot \Delta+ \h_\bx(S),\end{equation} 
where $w = \frac{\max_{v \in S} z_u}{\min_{v \in S} z_v}$. 
\end{enumerate}
\end{lemma}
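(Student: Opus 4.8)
The plan is to reduce the statement to the recursive spectral bisection already analyzed in \autoref{lem:graphdecomposition}, and then to pay the width $w$ when translating the resulting \emph{unweighted} cut bound into a bound on the \emph{weighted} expansion $\h_\bx$. First, if $|S|\le \delta n/2$ there is nothing to prove: the choice $T=S$ lands in Case~\ref{case:nonexpanding}, since the right-hand side of \eqref{eq:nonexpandingset} is at least $\h_\bx(S)$. So from now on assume $|S|>\delta n/2$, and run the recursive spectral bisection of \autoref{lem:graphdecomposition} on the graph $G[S]$, stopping as soon as every part has at most $\delta n/2$ vertices (with $n=|V(G)|$; since $|S|\le n$ the number of bisection rounds is still at most the $2\log(1/\delta)$ bound used there, and the per-round edge increase is still at most $\Delta\sqrt{2\epsilon}\,|S|$).

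Two outcomes are possible. If the procedure ever produces an $\epsilon\Delta$-expander $T\subseteq S$, it did so by one call of the bisection routine on a set of size $>\delta n/2$, so by \autoref{lem:bisection} we get $|T|\ge \tfrac34\cdot\tfrac{\delta n}{2}=\tfrac{3\delta n}{8}$; this is exactly Case~\ref{case:expander} and we are done. Otherwise we obtain a partition $\mathcal{P}$ of $S$ into parts of size at most $\delta n/2$ for which, by the edge-counting argument behind \eqref{eq:numberofedges} (now with $n=|S|$),
\[
  \sum_{T'\in\mathcal{P}}|E(T',S\setminus T')| \;\le\; 2\Delta\Bigl(\log\tfrac1\delta\Bigr)\sqrt{2\epsilon}\cdot|S|.
\]

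It remains to pull out of $\mathcal{P}$ a single part of small weighted expansion, and this is where the width enters. Every edge satisfies $x_{\{u,v\}}\le\min(z_u,z_v)\le\max_{v\in S}z_v$, while $\bz(S)\ge|S|\cdot\min_{v\in S}z_v$, so the total $\bx$-weight of the edges of $G[S]$ cut by $\mathcal{P}$ is at most $\Delta(\log\tfrac1\delta)\sqrt{2\epsilon}\,|S|\cdot\max_{v\in S}z_v\le w\,\Delta(\log\tfrac1\delta)\sqrt{2\epsilon}\cdot\bz(S)$. Summing $\bx(T',\overline{T'})$ over $T'\in\mathcal{P}$ counts each internal cut edge twice and each edge leaving $S$ once, and the edges leaving $S$ contribute $\bx(S,V\setminus S)=\h_\bx(S)\,\bz(S)$ in total, whence
\[
  \sum_{T'\in\mathcal{P}}\bx(T',\overline{T'}) \;\le\; \Bigl(2w\,\Delta\bigl(\log\tfrac1\delta\bigr)\sqrt{2\epsilon}+\h_\bx(S)\Bigr)\,\bz(S).
\]
Since $\sum_{T'\in\mathcal{P}}\bz(T')=\bz(S)$, the elementary inequality $\min_i a_i/b_i\le(\sum_i a_i)/(\sum_i b_i)$ for positive reals yields a part $T\in\mathcal{P}$ with $\h_\bx(T)=\bx(T,\overline{T})/\bz(T)$ at most $2w(\log\tfrac1\delta)\sqrt{2\epsilon}\,\Delta+\h_\bx(S)$; since $|T|\le\delta n/2$, this is Case~\ref{case:nonexpanding}.

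The proof has no deep obstacle; the one step that needs care is the unweighted-to-weighted conversion in the last paragraph. One must use both $x_{\{u,v\}}\le\min(z_u,z_v)$ and $\bz(S)\ge|S|\min_{v\in S}z_v$ to bring in the factor $w$, and must remember that $\h_\bx(T')$ accounts for \emph{all} edges leaving $T'$, not only those that stay inside $S$ — it is exactly the edges leaving $S$ that produce the additive $\h_\bx(S)$ term (rather than a bound that would degrade with $|\mathcal{P}|$). A minor bookkeeping point is that the recursion must be cut off at the absolute threshold $\delta n/2$ rather than at a $\delta$-fraction of $|S|$ (this is what guarantees the expander size $\tfrac{3\delta n}{8}$), and since $|S|\le n$ this affects neither the $\log(1/\delta)$ factor nor the round count.
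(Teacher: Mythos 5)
Your proof is correct and follows essentially the same route as the paper's. The paper handles the threshold by invoking \autoref{lem:graphdecomposition} on $G[S]$ with the rescaled parameter $\delta'=\delta n/|S|$ (which makes the stopping size $\delta'|S|/2=\delta n/2$, i.e., exactly your absolute threshold), and then bounds the expectation of $\h_\bx(T)$ over $T\in\mathcal{P}$ weighted by $\bz(T)$, which is the same averaging device as your $\min_i a_i/b_i\le(\sum a_i)/(\sum b_i)$ step; the unweighted-to-weighted conversion via $x_{\{u,v\}}\le\min(z_u,z_v)$ and $\bz(S)\ge|S|\min_{v\in S}z_v$ is identical.
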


\begin{proof}
If $|S| \leq \delta \cdot n /2$, $S$ satisfies Case \ref{case:nonexpanding} and we are done. Otherwise, we set $\delta' = \frac{\delta n}{|S|}$ and run  \autoref{alg:findingexpander} on input subgraph $G[S]$, $\delta'$ and $\epsilon$. If it returns a set $T \subseteq V(G)$, then we are in Case \ref{case:largeexpander} of  \autoref{lem:graphdecomposition} which implies we have found the desired expander. 
Now, assume the output of the algorithm is Case 
\ref{case:decomposition}, a partition $\mathcal{P}$ of $S$ satisfying \eqref{eq:numberofedges}. Since by 
 \autoref{alg:findingexpander}, any element of $\mathcal{P}$ has at 
most $\delta' \cdot |S| /2 = \delta \cdot n /2$ vertices, to prove the lemma, it suffices to 
show that there exists a set $T \in 
\mathcal{P}$ for which \eqref{eq:nonexpandingset} holds. To show it, we consider a probability distribution on elements of $\mathcal{P}$ where for every $T \in \mathcal{P}$, $\mathbb{P}(T) \propto \bz(T)$ and prove that 
\begin{equation}
\label{eq:expectedexpansion}
\mathbb{E}_{T}[h_\bx(T)] \leq 2w(\log \frac{1}{\delta})\sqrt{2\epsilon}\cdot \Delta + \h_\bx(S).
\end{equation} 
We can write $\mathbb{E}_T [\h_\bx(T)]$ as follows: 
\begin{eqnarray*}
\mathbb{E}_T [\h_\bx(T)] &= & \sum_{T \in \mathcal{P}} \mathbb{P}[T]\frac{\bx(T,\overline{T})}{\bz(T)} 
 = \frac{1}{\bz(S)} \sum_{T \in \mathcal{P}} \bx(T,\overline{T}) \\
&=& \frac{1}{\bz(S)} \left(\sum_{T \in \mathcal{P}} \bx(T,S\setminus{T})+ \sum_{T \in \mathcal{P}} \bx(T,\overline{S})\right) \\
&= &\frac{1}{\bz(S)} \sum_{T \in \mathcal{P}} \bx(T,S\setminus{T}) + \h_\bx(S)    
\end{eqnarray*}
So comparing to our goal, \eqref{eq:expectedexpansion}, we only need to prove 
\begin{equation}
\label{eq:sumofweightsofedges}
\sum_{T \in \mathcal{P}} \bx(T,S\setminus{T}) \leq 2w(\log \frac{1}{\delta})\sqrt{2\epsilon}\cdot \Delta \cdot \bz(S).
\end{equation}
Note that it simply follows from \eqref{eq:numberofedges} and 
\begin{equation*}
\frac{\sum_{T \in \mathcal{P}} \bx(T,S\setminus{T})}{\bz(S)}\leq w
\frac{\sum_{T \in \mathcal{P}} |E(T,S\setminus{T})|}{|S|},
\end{equation*}
which is implied by definition of $w$.
\end{proof} 
It is easy to see that Theorems \ref{thm:mainwithoutsideexpansion} and \ref{thm:maintheoremwithoutoutsideexpansion} follow from the above three lemmas. 

\mainthm*
\begin{proof}
We combine Lemmas \ref{lem:nonexpandingset}, \ref{lem:findingexpander}, and \ref{lem:expandingsmallsets} to prove the theorem. Let $w,\epsilon >0$ be two parameters that we will fix later. 
First, by  \autoref{lem:nonexpandingset}, we find a set 
$S$ with width $e^w$ such that 
\begin{equation}\label{eq:hxSissmall}h_\bx(S) \leq \frac{2\alpha(\log \, 
\frac{\Delta}{\alpha})}{w}\leq 2\frac{\Delta}{e\cdot w}
\end{equation}
where the last 
inequality holds since $\alpha \log \frac{\Delta}{\alpha}\leq 
\frac{\Delta}{e}$ for any $\alpha>0$. Then, we run the algorithm in  
\autoref{lem:findingexpander} 
on subgraph $G[S]$ and parameters $\epsilon$ and $\delta$. Let $T 
\subseteq S$ be the  output. We choose $\epsilon$ and 
$w$ such that $2e^w|\log\delta|\sqrt{2\epsilon}\cdot \Delta
+2\frac{\Delta}{ew} < \lambda/2$. 
This implies  Case 
\ref{case:expander} of  \autoref{lem:findingexpander} is satisfied; this is because Case \ref{case:nonexpanding} implies 
\begin{eqnarray*}h_\bx(T) &\leq& 2e^w|\log\delta|\sqrt{2\eps}\cdot\Delta+h_\bx(S)\\
	&\leq&  2e^w|\log\delta|\sqrt{2\epsilon}\cdot \Delta
+2\frac{\Delta}{ew} < \lambda/2,
\end{eqnarray*}
 which contradicts  \autoref{lem:expandingsmallsets} as $|T|\leq\delta n/2$. The second inequality in the above follows by \eqref{eq:hxSissmall}. Letting
$$\epsilon =\frac{{\lambda}^2}{129\Delta^2(\log^2 \delta)\cdot \exp(\frac{16\Delta}
{e\cdot\lambda})}, \, \, w= \frac{8\Delta}{e\cdot \lambda},$$
we get $2e^w|\log\delta|\sqrt{2\epsilon}\cdot \Delta
+2\frac{\Delta}{ew} < \lambda/2$.
Therefore, by Case \ref{case:expander} of \autoref{lem:findingexpander}, $|T| \geq 3\delta \cdot n/8$ and 
$$\lambda_2(L_{G[T]}) \geq \epsilon\Delta\gtrsim \frac{\lambda}{\log^2\delta\cdot\exp(O(\Delta/\lambda))}$$
as desired. In the second equation we absorbed the term $\Delta/\lambda$ in the denominator in $\exp(O(\Delta/\lambda))$.
\end{proof}

Using similar ideas combined with the constraint \eqref{eq:hconstraint} of \ref{sdp:withh}, we can prove  \autoref{thm:mainwithoutsideexpansion}.

\thmmainnoexp*
\begin{proof}
The structure of the proof is very similar to the proof of  
\autoref{thm:maintheoremwithoutoutsideexpansion}. Again, we use  
\autoref{lem:nonexpandingset} to find a set $S \subseteq V$ with 
width $e^w$, and run the algorithm in \autoref{lem:findingexpander} on  $G[S]$, $\delta$ and a proper value of 
$\epsilon$. The main 
difference  is to use Constraint \eqref{eq:hconstraint} of 
the \ref{sdp:withh} to prove a stronger upper bound on the weighted expansion of $S$, 
\begin{equation}\label{eq:betterexpansionS}h_\bx(S)\leq \frac{(\Delta-h^*)}{w}\log \frac{2\Delta}{(\Delta-h^*)}	
\end{equation}
First, recall that by \autoref{lem:nonexpandingset}, $h_\bx(S) 
\leq \frac{2\alpha(\log \, \frac{\Delta}{\alpha})}{w}$
where $\alpha =\frac{\sum_{e \in E} x_e}{\bz(V)}$.
It follows by Constraint \eqref{eq:hconstraint} (and  $z_v=\max_{e\sim v} x_e$) that 
$$\alpha=\frac{\sum_e x_e}{\sum_v \max_{e\sim v} x_e}\geq \frac{(\Delta-h^*)}{2}.$$
To prove \eqref{eq:betterexpansionS}, it is enough to note that $\alpha\log\frac{\Delta}{\alpha}$ is a decreasing function of $\alpha$ for $\alpha\geq \Delta/e$, and $(\Delta-h^*)/2\geq \Delta/e$ as $h^*\leq \Delta(1-2/e)$ by the lemma's assumption. Therefore, 
$$ h_\bx(S) \leq \frac2w\alpha\log\frac{\Delta}{\alpha} \leq \frac{(\Delta-h^*)}{w}\log\frac{2\Delta}{\Delta-h^*}.$$
Similar to \autoref{thm:maintheoremwithoutoutsideexpansion}, if we choose $\eps,w$ such that 
\begin{equation}\label{eq:hxSbetterlambda}|\log\delta|\sqrt{2\eps}\cdot\Delta+h_\bx(S) 
<\lambda/2,	
\end{equation}
then (by an application of \autoref{lem:expandingsmallsets}) Case \ref{case:expander} of \autoref{lem:findingexpander} is satisfied.
Letting $$w= \frac{ 4(\Delta-h^*)(\log \frac{2\Delta}{\Delta-
h^*})}{\lambda}, \hspace{3ex} \epsilon =\frac{{\lambda}^2}
{129\Delta^2(\log^2 \delta)\cdot (\Delta/(\Delta-
h^*))^{\frac{32(\Delta-h^*)}
{\lambda}}},$$  
and using \eqref{eq:betterexpansionS}, it is easy to see that \eqref{eq:hxSbetterlambda} is satisfied. 
Therefore, by Case \ref{case:expander} of \autoref{lem:findingexpander}, $|T|\geq 3\delta \cdot n /8$ and 
$$\lambda_2(L_{G[T]}) \geq \eps\cdot\Delta\gtrsim \frac{\lambda^2}{\Delta\log^2\delta(1+2h^*/\Delta)^{O(\Delta/\lambda)}}$$
as desired. In the second inequality we use that $\frac{\Delta}{\Delta-h^*}\leq 1+2h^*/\Delta$ for $\h^*\leq\Delta(1-2/e)$.
\end{proof}

\section{Integrality Gap}
In this section we prove that the integrality gap of \ref{sdp:noh} is  $\Omega(\min((\log \frac{1}{\delta}),\Delta))$. 
\begin{theorem}
\label{thm:IGfirst}
For any integer $\Delta > 0$ and $\delta=2^{-\Delta}$, there exists an $\Delta$-regular graph $G$ such that  $\lambda(\delta) \leq O(1)$, but the optimal value of \ref{sdp:noh} is at least $\Omega(\Delta)$.
\end{theorem}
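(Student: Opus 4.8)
The plan is to exhibit the integrality‑gap instance sketched right after the statement of \autoref{thm:IG1}: take the $\Delta$-dimensional hypercube $Q_\Delta$ (which has $2^\Delta = 1/\delta$ vertices), replace each vertex $i$ by a ``cloud'' $C_i$ of $m := \delta n$ fresh vertices, and for each hypercube edge $\{i,j\}$ put a complete bipartite graph between $C_i$ and $C_j$ in which every edge has weight $\frac{1}{m} = \frac{1}{\delta n}$. Since each cloud vertex has $\Delta$ neighbouring clouds and within each it has weighted degree $m\cdot\frac1m = 1$, the resulting weighted graph $G$ is $\Delta$-regular (in the weighted sense); if one insists on an unweighted $\Delta$-regular instance one instead makes each cloud have $\Delta\cdot(\delta n)$ vertices and connects clouds by perfect matchings, but the weighted description is cleanest and, as the paper notes, all results extend to weighted graphs.

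First I would lower bound the SDP optimum. The natural feasible solution mirrors the ``intended integral solution'' used to show \ref{sdp:noh} is a relaxation, but now applied to the whole vertex set: set $x_e = 1$ for every edge $e$ of $G$, and set $y_{\{u,v\}} = \frac{1}{n}$ for \emph{all} pairs $u,v\in V$. Constraints \eqref{eq:sumyconstraint} and \eqref{eq:maxyconstraint} are immediate ($z_u = 1$ for all $u$, $\sum_v y_{\{u,v\}} = 1$, and $\delta n \cdot \frac1n = \delta \le 1 = \sum_w y_{\{u,w\}}$). The content is in \eqref{eq:lambdaconstraint}: I must show $\sum_e x_e L_e = L_{G}$ satisfies $L_G \succeq \lambda\cdot \frac{1}{n} L_{K_V}$ for $\lambda = \Omega(\Delta)$, i.e.\ $\lambda_2(L_G)\geq \lambda/n \cdot n = \lambda$, which amounts to showing $\lambda_2(L_G) = \Omega(\Delta)$. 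This is where I use the product structure: $G$ is (up to the cloud blow‑up, which only adds a high‑multiplicity eigenvalue and doesn't decrease $\lambda_2$) the tensor/Cartesian‑type product of $\Delta$ copies of a weighted edge, so its Laplacian spectrum is governed by that of $Q_\Delta$. Concretely, the cloud‑blown‑up graph's Laplacian, restricted to the ``cloud‑constant'' subspace, is exactly $L_{Q_\Delta}$ scaled appropriately, whose second eigenvalue is $\Theta(1)$; but the correct normalization makes the weighted hypercube edge contribute spectral gap $\Theta(1)$ \emph{per coordinate}, so after summing over the $\Delta$ coordinates and accounting for the $\frac1m$ edge weights one gets $\lambda_2(L_G) = \Theta(1)$ in the right units — wait, this is exactly the subtlety, see below. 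Modulo getting that constant right, $\lambda \geq \Omega(\Delta)$ (or $\Omega(1)$ times the relevant scale) follows.

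Next I would upper bound $\lambda(\delta)$. Any set $S$ with $|S|\geq \delta n = m$ is large enough to contain — by pigeonhole across the $1/\delta$ clouds, or by a direct counting argument — a configuration forcing a sparse cut inside $G[S]$: since $S$ meets at most all $1/\delta$ clouds but has only $m$ vertices total, $S$ looks ``spread out'' over the hypercube, and one can find a coordinate direction (a dimension cut of $Q_\Delta$) that splits the clouds meeting $S$ into two roughly balanced halves with only $O(1)$ weighted expansion between them inside $G[S]$ — here one uses that each cross‑dimension bipartite graph has total weight $m\cdot\frac1m = 1$ while each half contains $\Omega(|S|)$ vertices, so the cut is sparse relative to $|S|$. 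This gives $h(G[S]) = O(1)$ for every such $S$, hence by the easy direction of Cheeger's inequality \autoref{thm:cheeger}, $\lambda_2(L_{G[S]}) \leq 2h(G[S]) = O(1)$, so $\lambda(\delta) = O(1)$.

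The main obstacle I anticipate is \textbf{getting the normalization exactly right so that the gap is genuinely $\Omega(\Delta)$ rather than $\Omega(1)$}: the SDP objective $\lambda$ is, by \eqref{eq:lambdaconstraint} with our solution, essentially $\lambda_2(L_G)$ (since $z\equiv 1$ makes $\sum y L_{u,v} = \frac1n L_{K_V}$ and $\lambda_2(\frac1n L_{K_V}) = 1$), so I need the weighted hypercube‑of‑clouds to have $\lambda_2 = \Theta(\Delta)$, and simultaneously every large induced subgraph to have $\lambda_2 = O(1)$. Achieving the first requires the per‑coordinate spectral gaps to \emph{add up} — which they do for Cartesian products of Laplacians but \emph{not} for the naive $\frac1m$-weighting, which would make the whole graph too sparse. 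The fix is to weight each cross‑cloud bipartite graph by $\frac{\Delta}{m}$ (or, in the unweighted version, use $\Delta\cdot m$-sized clouds with matchings so each vertex genuinely has $\Delta$ neighbours), making $G$ $\Delta$-regular with $\lambda_2(L_G) = \Theta(\Delta)$ while preserving that a large induced subgraph still admits an $O(\Delta)$-cheap dimension cut, hence $\lambda(\delta) = O(1)$ only if one is careful — actually in the regular version each cut still has size $\Theta(\Delta)\cdot$(something), so one must re‑examine and the true statement becomes $\lambda(\delta)\le O(\Delta/\Delta) = O(1)$ only under the right accounting. Resolving this bookkeeping — choosing the cloud sizes and edge weights so that $G$ is $\Delta$-regular, $\lambda_2(L_G) = \Omega(\Delta)$, and yet $\lambda(\delta) = O(1)$ — is the crux; once the parameters are pinned down, both the SDP feasibility check and the Cheeger upper bound are routine, and the ratio is $\Omega(\Delta) = \Omega(\min(\log\frac1\delta,\Delta))$ since $\log\frac1\delta = \Delta$.
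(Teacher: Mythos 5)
Your construction is the paper's (hypercube of clouds with per-edge weight $\frac{1}{\delta n}$), and your upper bound on $\lambda(\delta)$ via dimension cuts of the hypercube matches the paper's argument. But the SDP feasible solution is where the proposal breaks, and the ``bookkeeping'' you flag at the end is not a loose end you can tie up: it is a genuine dead end. With $y_{\{u,v\}} = \frac{1}{n}$ on \emph{all} pairs, constraint~\eqref{eq:lambdaconstraint} reads $L_G \succeq \frac{\lambda}{n} L_{K_V}$, which is equivalent to $\lambda \leq \lambda_2(L_G)$. For the $\frac{1}{\delta n}$-weighted hypercube-of-clouds one has $\lambda_2(L_G) = 2$: on cloud-constant vectors $L_G$ acts exactly as $L_{Q_\Delta}$, whose spectral gap is $2$ (independent of $\Delta$), and on cloud-orthogonal vectors $L_G$ acts as $\Delta\cdot I$. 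So your $\by$ only certifies $\lambda = O(1)$. Moreover, upweighting edges to force $\lambda_2(L_G) = \Omega(\Delta)$ is self-defeating by the very definition of $\lambda(\delta)$: taking $S = V$ in \eqref{eq:benchmarkinducedexp} gives $\lambda(\delta) \geq \lambda_2(L_G)$, so no graph with $\lambda_2(L_G) = \Omega(\Delta)$ can have $\lambda(\delta) = O(1)$. In short, no all-pairs $\by$ can exhibit an integrality gap for this SDP.

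The missing idea is to change the \emph{support} of $\by$, not its scale. The paper sets $y_{\{u,v\}} = \frac{1}{\delta n}$ only for pairs $u,v$ lying in the \emph{same} cloud $B_i$ and $0$ otherwise. Then $L_{G_\by}$ is block-diagonal over clouds: it vanishes on cloud-constant vectors and equals $\frac{1}{\delta n}L_{K_{\delta n}}$ (all nonzero eigenvalues $=1$) on each cloud. Compared against this $L_{G_\by}$, the inequality $L_G \succeq \Omega(\Delta)L_{G_\by}$ only has content on cloud-orthogonal directions, where $L_G f = \Delta f$ pointwise, so the ratio $\Delta$ is exactly right; meanwhile $\lambda_2(L_G)=2$ is untouched, preserving $\lambda(\delta)=O(1)$. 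The paper verifies the PSD inequality by summing, over hypercube edges $\{i,j\}$, the local inequality $\tilde{L}_{G_\bx[B_i\cup B_j]} \succeq \Omega(1)\,\tilde{L}_{G_\by[B_i\cup B_j]}$ (a complete bipartite graph on $B_i\cup B_j$ dominates, up to a constant, two disjoint same-cloud cliques scaled by $\frac{1}{\delta n}$), extended by zeros to $\R^{V\times V}$; the factor $\Delta$ then appears because each same-cloud pair $\{u,v\}\subseteq B_i$ is counted once for each of the $\Delta$ hypercube neighbors $j$ of $i$.
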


\begin{proof}
Let $H$ be a $\Delta$-dimensional hypercube with $2^\Delta$ vertices. We let $n$ be a  sufficiently large multiple of $2^\Delta$ and construct $G$ as follows:   
We blow up every vertex $i \in V(H)$ by a cloud of  $\frac{ 
n}{2^\Delta}$ vertices, called $B_i$. For every edge $\{i,j\} \in E(H)$, 
we place a complete  bipartite graph between $B_i$ and $B_j$, where 
the weight of every edge is $\frac{1}{\delta n}$\footnote{Here for the sake of 
simplicity, we construct a weighted graph $G$, but one can extend the construction to unweighted graphs by replacing the weighted complete bipartite graphs with constant degree expanders.}.
By definition, $G$ is a $\Delta$-regular graph. First, we show $\lambda(\delta)=O(1)$ and then we 
build a feasible solution of \ref{sdp:noh} of value $
\Omega(\Delta)$.

For every $S \subseteq V(G)$, we prove $\h(G[S]) = O(1)$, which 
by Cheeger's inequality (\autoref{thm:cheeger}) implies $\lambda_2(L_{G[S]}) = O(1)$ and 
consequently $\lambda(\delta) = O(1)$.
 Without loss of generality, 
assume there is a  dimension cut $(T,\overline{T})$ of $H$ such that the union of clouds of vertices of $T$ cut $S$. Let $B_T=\cup_{i\in T} B_i$. Since for each vertex $u\in B_T$, only $1/\Delta$ fraction of edges incident to $u$ are leaving $B_T$, we have 
$$\h_{G[S]}(B_T\cap S)\leq \h(B_T\cap S) = O(1).$$
Similarly, $\h_{G[S]}(B_{\overline{T}}\cap S) = O(1)$; so $\h(G[S])=O(1)$. 

It remains to  present a feasible 
solution for \ref{sdp:noh}  of value $\Omega(\Delta)$. We  
construct $\bx,\by$ as follows: \\ 
\begin{equation*}
\begin{aligned}
 x_{\{u,v\}}&=& &1&  \hspace{3ex} &\forall u \sim v&   \\
 y_{\{u,v\}} &=& &\frac{1}{\delta n}& \hspace{3ex} &\forall i \in V(H), \, \, \forall u,v \in B_i& \\
\end{aligned}
\end{equation*}
With this solution, the only non-trivial constraint of  \ref{sdp:noh} that we should verify is the first constraint, i.e,
\begin{eqnarray}
\label{eq:SDPmainconstraint}
L_{G_{\bx}} \succeq \Omega(\Delta)L_{G_{\by}}
\end{eqnarray}
Note that since $x_{\{u,v\}}=1$ for all $\{u,v\}\in E(G)$, $G_\bx=G$ (and $L_{G_\bx}=L_G$).
Let $\{i,j\}$ be an edge of $H$. Since $G[B_i \cup B_j]$ is a complete bipartite graph, we have 
\begin{eqnarray}
\label{eq:hypercubeexpansion}
L_{G_\bx[B_i \cup B_j]} \succeq \Omega\left(\frac{L_{K_{2\delta n}}}{\delta n}\right)   \succeq \Omega(L_{G_\by[B_i \cup B_j]})
\end{eqnarray}

Rewriting the above inequality by extending $L_{G_\bx[B_i \cup B_j]}$ and $L_{G_\by[B_i \cup B_j]}$ to $\tilde{L}_{G_\bx[B_i\cup B_j]},\tilde{L}_{G_\by[B_i\cup B_j]}\in \R^{V\times V}$, by inserting zero rows and columns corresponding to vertices in $\overline{B_i\cup B_j}$, we get
$$ \tilde{L}_{G_\bx[B_i \cup B_j]} \succeq \Omega(\tilde{L}_{G_\by[B_i\cup B_j]}).$$ 
Summing up the above inequality  over all $\{i,j\} \in E(H)$ gives 
\eqref{eq:SDPmainconstraint}.
\end{proof}

\section{Discussion}
We provide the first approximation algorithms for the largest induced expander problem. Let us conclude by providing several open problems and future directions. Firstly, we can show that the  exponential dependency on $\Delta/\lambda$ in \autoref{thm:maintheoremwithoutoutsideexpansion} is necessary to our rounding algorithm. But, we are not aware of any tight integrality gap example. It is a fascinating question if this dependency can be improved to $\poly(\Delta/\lambda)$. Secondly, our techniques fail to find induced expanders in dense regular graphs when $\Delta$ is significantly larger than $\delta n$; in such cases, one can construct a trivial integral SDP solution for any given graph $G$. A resolution of this question can lead to new approximation algorithms for the hidden clique problem. Perhaps a practical downside of our algorithm is the need to solve a semidefinite program. It is interesting if one can reproduce our results using fast spectral methods. 
\bibliographystyle{alpha}
\bibliography{references}

%


\end{document}